\documentclass[12pt,draftclsnofoot,onecolumn,journal]{IEEEtran}
\hoffset -2mm
\textheight 23.7cm
\textwidth 6.6in
\ifCLASSINFOpdf
\else
\fi
%
%
%
%
\usepackage{amsmath,dsfont,bbm,epsfig,amssymb,amsfonts,amstext}\usepackage{verbatim,amsopn,cite,subfigure,multirow,multicol,lipsum,xfrac}
\usepackage{amsthm}
\usepackage{mathtools,amsthm}
\usepackage{perpage}
\usepackage{balance}
\usepackage{url}
\usepackage{amsfonts}
\usepackage{epsfig}
\usepackage{caption}
\usepackage{psfrag}	
\usepackage{etoolbox}
\usepackage[Algorithm,ruled]{algorithm}
\usepackage{algorithmicx}
\usepackage{algpseudocode}
\usepackage{pifont}
\usepackage[utf8]{inputenc}
\usepackage[T1]{fontenc}  
\usepackage[nolist]{acronym}
\MakePerPage{footnote}
\usepackage{paralist}
\usepackage{enumitem}
\usepackage{bbm}
\usepackage[process=auto]{pstool}
\usepackage{tikz,pgfplots}
\usetikzlibrary{shapes,arrows}
\captionsetup{size=footnotesize,
    skip=6pt, position = bottom}

\hyphenation{op-tical net-works semi-conduc-tor}
\newcommand{\setX}{\mathbbmss{X}}
\newcommand{\setH}{\mathbbmss{H}}
\newcommand{\setL}{\mathbbmss{L}}

\newcommand{\setR}{\mathbbmss{R}}

\newcommand{\setS}{\mathbbmss{S}}

\newcommand{\setC}{\mathbbmss{C}}

\DeclareMathOperator*{\argmin}{argmin} 
\DeclareMathOperator*{\argmax}{argmax}

\newcommand{\rmp}{\mathrm{p}}
\newcommand{\rmq}{\mathrm{q}}

\newcommand{\rmR}{\mathrm{R}}

\newcommand{\rmG}{\mathrm{G}}

\newcommand{\her}{\mathsf{H}}

\newcommand{\mae}{\mathcal{E}}

\newcommand{\man}{\mathcal{N}}

\newcommand{\bi}{\mathbf{i}}

\newcommand{\snr}{P/ \sigma^2}

\newcommand{\rmj}{\mathrm{j}}
\newcommand{\rms}{\mathrm{s}}

\newcommand{\RLS}[2]{\mathrm{RLS}_{#1} \left( #2 \right)}
\newcommand{\rls}[2]{\mathrm{rls}_{#1} \left( #2 \right)}

\newcommand{\dec}{\mathrm{dec}}
\newcommand{\reg}{\mathrm{reg}}
\newcommand{\loge}{\left. \mathrm{ln} \right.}
\newcommand{\sign}[1]{ \mathrm{sign} \left( #1 \right)}

\newcommand{\bx}{{\boldsymbol{x}}}

\newcommand{\vv}{\mathrm{v}}

\newcommand{\xx}{\mathrm{x}}
\newcommand{\yy}{\mathrm{y}}
\newcommand{\zz}{\mathrm{z}}

\newcommand{\mac}{\mathcal{C}}

\newcommand{\bv}{{\boldsymbol{v}}}

\newcommand{\bdd}{{\mathbf{d}}}

\newcommand{\dif}{\mathrm{d}}

\newcommand{\by}{{\boldsymbol{y}}}

\newcommand{\bn}{{\boldsymbol{n}}}

\newcommand{\trp}{\mathsf{T}}

\newcommand{\mA}{\mathbf{A}}

\newcommand{\mI}{\mathbf{I}}
\newcommand{\mone}{\mathbf{1}}

\newcommand{\mU}{\mathbf{U}}

\newcommand{\mV}{\mathbf{V}}

\newcommand{\mH}{\mathbf{H}}

\newcommand{\mse}{\mathrm{MSE}}

\newcommand{\asy}{{\mathrm{asy}}}

\newcommand{\Ex}[2]{\mathbbmss{E}_{#2}\left\{#1\right\}}
\newcommand{\norm}[1]{\lVert #1 \rVert}
\newcommand{\set}[1]{\left\lbrace #1 \right\rbrace}
\newcommand{\brc}[1]{\left( #1 \right)}
\newcommand{\dbc}[1]{\left[ #1 \right]}
\newcommand{\real}[1]{\mathbbm{Re}\left\lbrace #1 \right\rbrace}
\newcommand{\img}[1]{\mathbbm{Im}\left\lbrace #1 \right\rbrace}

\newcommand{\abs}[1]{\lvert #1 \rvert}

\newcommand{\Sp}[1]{\mathrm{Supp}\left( #1 \right) }

\newtheoremstyle{mystyle}
  {}
  {}
  {\it}
  {}
  {\bfseries}
  {:}
  { }
  {}

\theoremstyle{mystyle}

\newtheorem{definition}{Definition}

\newtheorem{theorem}{Theorem}
\newtheorem{remark}{Remark}

\newtheorem*{tuning}{Tuning Approach}
\newtheorem{example}{Example}
%
%
\algnewcommand\algorithmicLet{\textbf{Let}}
\algnewcommand\Let{\item[\algorithmicLet]}
\algnewcommand\algorithmicSet{\textbf{Set}}
\algnewcommand\Set{\item[\algorithmicSet]}

\algnewcommand\algorithmicInitiate{\textbf{Initiate}}
\algnewcommand\Initiate{\item[\algorithmicInitiate]}
\algnewcommand\algorithmicStart{\textbf{Begin}}
\algnewcommand\Begin{\item[\algorithmicStart]}
\algnewcommand\algorithmicEnd{\textbf{End}}
\algnewcommand\End{\item[\algorithmicEnd]}

\algnewcommand\algorithmicOutP{\textbf{Output:}}
\algnewcommand\Out{\item[\algorithmicOutP]}

\newcounter{bar}


\setlength{\columnsep}{0.23in}
\begin{document}

\begin{acronym}
\acro{mimo}[MIMO]{multiple-input multiple-output}
\acro{ssk}[SSK]{space shift keying}
\acro{bpsk}[BPSK]{binary PSK}
\acro{csi}[CSI]{channel state information}
\acro{awgn}[AWGN]{additive white Gaussian noise}
\acro{iid}[i.i.d.]{independent and identically distributed}
\acro{ut}[UT]{user terminal}
\acro{bs}[BS]{base station}
\acro{sm}[SM]{spatial modulation}
\acro{masm}[MA-SM]{multiple-active SM}
\acro{glse}[GLSE]{generalized least squared error}
\acro{rls}[RLS]{regularized least-squares}
\acro{rhs}[r.h.s.]{right hand side}
\acro{lhs}[l.h.s.]{left hand side}
\acro{wrt}[w.r.t.]{with respect to}
\acro{rs}[RS]{replica symmetry}
\acro{rsb}[RSB]{replica symmetry breaking}
\acro{papr}[PAPR]{peak-to-average power ratio}
\acro{rzf}[RZF]{regularized zero forcing}
\acro{psk}[PSK]{phase shift keying}
\acro{pam}[PAM]{pulse shape modulation}
\acro{qam}[4-QAM]{quadrature amplitude modulation}
\acro{snr}[SNR]{signal-to-noise ratio}
\acro{rf}[RF]{radio frequency}
\acro{tdd}[TDD]{time division duplexing}
\acro{mf}[MF]{match filtering}
\acro{gamp}[GAMP]{generalized approximate message passing}
\acro{map}[MAP]{maximum-a-posteriori-probability}
\acro{mse}[MSE]{mean square error}
\acro{mmse}[MMSE]{minimum MSE}
\acro{ml}[ML]{maximum-likelihood}
\acro{as}[AS]{antenna selection}
\acro{svd}[SVD]{singular value decomposition}
\acro{lasso}[LASSO]{least absolute shrinkage and selection operator}
\acro{aep}[AEP]{asymptotic equipartition property}
\end{acronym}

\title{Detection of Spatially Modulated Signals via RLS: Theoretical Bounds and Applications}
\author{
\IEEEauthorblockN{
Ali Bereyhi, \textit{Member, IEEE},
Saba Asaad, \textit{Member, IEEE},
Bernhard G\"ade,\\
Ralf R. M\"uller, \textit{Senior Member, IEEE}, and
H. Vincent Poor, \textit{Fellow, IEEE}
}
\thanks{This work has been presented in parts at the 2019 IEEE International Symposium on Information Theory (ISIT) in Paris \cite{bereyhi2019rls}.}
\thanks{Ali Bereyhi, Saba Asaad, Bernhard G\"ade and Ralf R. M\"uller are with the Institute for Digital Communications, Friedrich-Alexander Universit\"at Erlangen-N\"urnberg, Germany, \textit{\{ali.bereyhi, saba.asaad, bernhard.gaede, ralf.r.mueller\}@fau.de}. H.~Vincent~Poor is with the Electrical Engineering Department, Princeton University, NJ 08544, \textit{poor@princeton.edu}.}
\thanks{This work was supported by Deutsche Forschungsgemeinschaft under Project-No. MU 3735/7-1, and by the U.S. National Science Foundation under Grant CCF-1908308.}
}


\IEEEoverridecommandlockouts

\maketitle

\begin{abstract}
This paper characterizes the performance of massive multiuser spatial modulation MIMO systems, when a regularized form of the least-squares method is used for detection. For a generic distortion function and right unitarily invariant channel matrices, the per-antenna transmit rate and the asymptotic distortion achieved by this class of detectors is derived. Invoking an asymptotic characterization, we address two particular applications. Namely, we derive the error rate achieved by the computationally-intractable optimal Bayesian detector, and we propose an efficient approach to tune a LASSO-type detector. We further validate our derivations through various numerical experiments.

\end{abstract}

\begin{IEEEkeywords}
	Multiple-active spatial modulation, box-LASSO, regularized least-squares, massive MIMO, maximum-a-posteriori-probability detection
\end{IEEEkeywords}

\IEEEpeerreviewmaketitle

\section{Introduction}
\label{sec:intro}

\Ac{sm} is a multiple-antenna transmission technique in which information is conveyed not only through transmitted symbols, but also via the indices of the transmit antennas \cite{mesleh2006spatial,mesleh2008spatial,jeganathan2008spatial,di2011spatial,yang2014design}. Initial realizations of \ac{sm} mainly performed index modulation\footnote{By index modulation, we mean that the data is only conveyed via the index of the selected antennas.} and are often referred to as \ac{ssk} techniques in the literature. Later developments extended the idea to more generalized schemes, some of which can be followed in \cite{jeganathan2008generalized,jeganathan2009space,di2010general,Mseleh2011SSK,di2011bit,di2012space,popoola2013error,basar2011space,bian2015differential} and the references therein. Among the various available schemes, \ac{masm} is the most generic form in which the transmitter performs index modulation over a subset of multiple transmit antennas. The active antennas further transmit information using a generic constellation set \cite{fu2010generalised,wang2012generalised,cheng2015enhanced}.

This paper investigates the large-system performance of a classic \ac{masm} system, when the receiver detects the transmitted data symbols jointly via a \ac{rls}-based algorithm. The motivation behind such a study is demonstrated in the shadow of two main facts:
\begin{inparaenum}
	\item[(i)] Compatibility of \ac{masm} with massive \ac{mimo} systems.
	\item[(ii)] Generality of \ac{rls}-based detection.
\end{inparaenum}
In the sequel, we briefly discuss these motivational facts.

\subsection{Massive SM MIMO Systems}
Theoretical analyses and implementational validations indicate that massive \ac{mimo} technology will be a key element in future generations of cellular networks \cite{hoydis2013massive,larsson2014massive,bjornson2015massive,bjornson2016massive,gao2018low,kuehne2018analog,kuehne2020performance}. Efforts for enabling this technology as a cost-efficient standard that can be commercialized have led to several paradigm-shifts in \ac{mimo} designs, among which hybrid architectures \cite{alkhateeb2014channel,gao2016energy,mendez2016hybrid,asaad2017asymptotic,bereyhi2019papr}, nonlinear precoding schemes \cite{bereyhi2017wsa,sedaghat2017least,bereyhi2019glse,bereyhi2019rls}, user and antenna selection techniques \cite{li2014energy,gao2015massive,liu2016efficient,asaad2017tas,bereyhi2017isit,asaad2018massive,asaad2018optimal,bereyhi2018stepwise}, and single \ac{rf} chain \ac{mimo} \cite{sedaghat2014novel,sedaghat2016load,gade2017novel,bereyhi2020single} are prominent examples.

From implementational viewpoints, \ac{sm} addresses the same issue as antenna selection: \textit{Reducing the overall \ac{rf} cost} by using fewer \ac{rf} chains than the number of transmit antennas. The key difference here is that \ac{sm} increases the data rate, compared to the antenna selection technique, by transmitting extra information bits via \textit{index modulation}. This enhancement is achieved at the expense of loosing some diversity gain at the transmitter side. This trade-off was investigated in \cite{gade2019fair}. The results show that \ac{sm} is a suitable technique when it is employed in multiuser massive \ac{mimo} systems with sufficiently many receive antennas\footnote{See the concluding points of \cite{gade2019fair}.}.

The findings of \cite{gade2019fair} anticipate that \ac{sm} will, in practice, be deployed for uplink transmission in massive \ac{mimo} systems. This motivates us to characterize the performance of \ac{sm} in its generic form, i.e., \ac{masm}, in the large-system limit.

\subsection{RLS-based Detection}
Detection of a spatially modulated signal mainly deals with two tasks: \textit{support recovery}, and \textit{symbol detection}. In fact, given noisy and linearly projected observations obtained through the channel, the receiver needs to detect both the indices of active antennas and their corresponding transmitted symbols. The former is used to recover the information conveyed via index modulation, while the latter recovers the information transmitted via conventional modulation.

In the Bayesian framework, the optimal approach for detection is to find the realization of the transmit signals whose posterior probability conditioned on the received signal is maximized, i.e., \ac{map} detection \cite{jeganathan2008spatial,wen2015low}. However, due to the \textit{sparse} nature of spatially modulated signals, optimal detection reduces to an \textit{integer programming} problem whose complexity grows exponentially with signal dimension, and hence, is not tractable in practice. An alternative approach is to look at the detection task as a \textit{sparse recovery} problem\footnote{Despite mathematical equivalency of the two tasks, a conventional sparse recovery problem differs from \ac{sm} detection in the fact that the latter is not \textit{necessarily} underdetermined.} in which a sparse signal is to be recovered from a set of noisy linear observations \cite{candes2006robust,donoho2006compressed}. Following this alternative viewpoint, several lines of work have proposed low-complexity detection algorithms using sparse recovery techniques; see for example \cite{yu2012compressed,garcia2015low,xiao2017compressed,hemadeh2018compressed}.

For Gaussian channels, both Bayesian and sparse recovery techniques are formulated similarly: The detector minimizes the \textit{residual sum of squares}, i.e., the error between the received signal and the projection of possible transmit signals over the given channel, subject to some side constraints restricting the sparsity and constellation. In the context of linear regression, this approach is called the method of \ac{rls}. Most \ac{sm} detection algorithms are mathematically equivalent to \ac{rls}. Hence, analyzing a generic \ac{rls}-based detection scheme  characterizes of a large scope of algorithms. Following this, we consider a generic \ac{rls}-based detector and investigate its performance in various respects. The generality of this setting enables us to address multiple special cases, among which we discuss the optimal Bayesian detector and the box-constrained \ac{lasso} detector in details.

\subsection{Contributions and Organization}
This work characterizes the asymptotic performance of a multiuser \ac{mimo} setting in which users employ the \ac{masm} for uplink transmission and the receiver uses an \ac{rls}-based detector to jointly detect the transmitted signals. There are some earlier works that investigate special cases of this setting; e.g., \cite{atitallah2017box,thrampoulidis2018symbol,alrashdi2019precise,alrashdi2020box}. Nevertheless, the analyses and results of this paper are new in various respects:
\begin{itemize}
	\item The standard approach to the analysis of massive \ac{sm} \ac{mimo} systems is to model the transmit signal as an \ac{iid} sparse sequence; see for example \cite{atitallah2017box}. Although this assumption simplifies large-system analyses and can lead to a good approximation, it does not precisely model an \ac{sm} system. In fact, using index modulation via a pre-defined codebook, the transmit signals \textit{are not necessarily \ac{iid}}, even if the codebook is generated randomly; see Example~\ref{ex:1} in Section~\ref{sec:SM}. We address this issue by deviating from the classic approach and considering a precise model for \ac{sm} signals. Invoking our prior results on asymmetric \ac{map} estimation \cite{bereyhi2018MAP}, we characterize the asymptotic performance by considering this more accurate model.
	\item In contrast to earlier works in the literature, we study the \ac{sm} system by considering a generic model for the channel matrix. Namely, we consider the channel matrix to be a \textit{complex-valued right unitarily invariant} random matrix. This includes various well-known models for the fading process, e.g., the standard Rayleigh fading model.
	\item The analysis in this work considers a generic form of \ac{rls}-based detection. This not only allows us to derive error bounds for optimal and sub-optimal detection algorithms, but also to address various design tasks. As an example, we discuss the particular application of tuning box-constrained \ac{lasso} detectors in this paper.
\end{itemize}

The remaining parts of this manuscript are organized as follows: The setting is modeled in Section~\ref{sec:sys}. The main results and derivations are presented in Section~\ref{sec:Large}. As an application of the results, the optimal Bayesian detection algorithm is characterized in Section~\ref{sec:App1}. Applications of the results to the analysis and tuning of box-constrained \ac{lasso} detectors are discussed in Section~\ref{sec:App2}. The paper is finally concluded in Section~\ref{sec:Conc}.

\subsection{Notation and Basic Definitions}
Scalars, vectors and matrices are represented with non-bold, bold lower-case and bold upper-case letters, respectively. $\mH^{\her}$ indicates the transposed conjugate of $\mH$, and $\mI_N$ is an $N\times N$ identity matrix. The $\ell_p$-norm of $\bx$ is denoted by $\norm{\bx}_p$ and $\norm{\bx}_0$ is the ``$\ell_0$-norm'' of $\bx$ which counts the number of non-zero entries. $\log$ and $\loge \left(\cdot\right)$ indicate the binary and natural logarithm, respectively. $\Ex{}{x}$ is expectation with respect to $x$. The binary entropy function $H_2\brc{p}$ is defined as
\begin{align}
	H_2\brc{p} =
	-p\log\brc{p}-\brc{1-p}\log\brc{1-p},
\end{align}
for some $p\in\brc{0,1}$ and is zero if $p\in\set{0,1}$. $\lfloor x \rfloor$ denotes the floor function that rounds $x$ to the nearest integer $i \leq x$. $\setR$ and $\setC$ refer to the real axis and the complex plane, respectively. For $z\in\setC$, $z^*$, $\real{z}$ and $\img{z}$ denote the complex conjugate, real part and imaginary part of $z$, respectively. $\mathcal{CN}\brc{\eta,\sigma^2}$ represents the complex Gaussian distribution with mean $\eta$ and variance $\sigma^2$. For a given set $\setS$, we use the notation $\setS_0$ to denote $\setS_0 \coloneqq \set{0}\cup\setS$. $\setX^{\rm C}$ denotes the complement of $\setX$ with respect to a mother set. For sake of brevity, $\set{1,\ldots,N}$ is abbreviated as $\dbc{N}$.

For sake of brevity, we further define the Stieltjes transform and the $\rmR$-transform for large random matrices as follow: Consider matrix $\mA\in\setC^{N\times M}$, and let $\vartheta_1, \ldots, \vartheta_M$ denote the eigenvalues of $\mA^\her \mA$, i.e., the singular values of $\mA$. Define the density of the states as
\begin{align}
	\rmp_\mA^{\brc{M}} \brc{\vartheta} = \frac{1}{M} \sum_{m=1}^M \mone\set{ \vartheta = \vartheta_m} 
\end{align}
and let $\rmp_\mA\brc{\vartheta} $ be its limit when $M$ and $N$ go to infinity with a fixed ratio, i.e, $N/M$ is fixed. We refer to $\rmp_\mA\brc{\vartheta} $ as the \textit{asymptotic singular value distribution} $\mA$. For this distribution, the Stieltjes transform is defined as
\begin{align}
	\rmG_\mA \brc{s} = \int \frac{\rmp_\mA \brc{\vartheta} }{\vartheta - s} \dif \vartheta
\end{align}
for some complex $s$ with $\img{s} \geq 0$. Denoting the inverse of $\rmG_\mA\brc{\cdot}$ with respect to composition with $\rmG_\mA^{-1}\brc{\cdot}$, the $\rmR$-transform is then defined as %
$\rmR_\mA\brc{\omega} = \rmG_\mA^{-1}\brc{-\omega} - {\omega}^{-1}$, 
such that
\begin{align}
	\lim_{\omega \to 0} \rmR_\mA\brc{\omega} = \int \vartheta \rmp_\mA \brc{\vartheta}  \dif \vartheta.
\end{align}
We use these transforms to represent the main results.

\section{Problem Formulation}
\label{sec:sys}
We consider a Gaussian \ac{mimo} broadcast channel, in which $K$ users transmit uplink signals to a single \ac{bs}. Each user is equipped with $M_{\rm u}$ antennas and $L_{\rm u}$ transmit \ac{rf} chains. This means in each transmission time interval, only $L_{\rm u}$ transmit antennas are active at each user terminal. We denote the fraction of active antennas by $\eta = L_{\rm u} / M_{\rm u}$ and refer to it as the \textit{activity ratio}. The \ac{bs} is equipped with $N$ receive antennas. The uplink channel in this case is compactly represented by
\begin{align}
\by = \left. \mH \right. \bx + \bn
\end{align}
where $\mH$, $\bx$ and $\bn$ represent the channel matrix, transmit signal and \ac{awgn}, respectively, and fulfill the following constraints:
\begin{enumerate}[label=(\alph*)]
\item $\mH\in\setC^{N \times M}$ with $M = K M_{\rm u}$ being the total number of available transmit antennas in the network. The entries of $\mH$ represent channel gains between transmit and receive antennas over a single time-frequency resource.

The channel is assumed to experience quasi-static fading with slow time variations, meaning that the gains are fixed within the given frequency band during a \textit{coherence time interval} which is considerably larger than a symbol interval. We consider a generic stochastic model for the fading process. Namely, it is assumed that $\mH$ is a \textit{right unitarily invariant} random matrix. This means that $\mH$ has a \ac{svd}
\begin{align}
\mH = \mU \mathbf{\Sigma} \mV^\her, \label{svd}
\end{align}
where $\mathbf{\Sigma}\in \setR^{N\times M}$ contains the singular values of $\mH$ on the main diagonal and zeros elsewhere, $\mU\in\setC^{N\times N}$ is a unitary matrix, and $\mV\in\setC^{M\times M}$ is a \textit{Haar-distributed} unitary matrix, i.e., $\mV$ is distributed uniformly over the set of unitary matrices.

The ensemble of right unitarily invariant random matrices includes a variety of fading models including the standard \ac{iid} Rayleigh fading model.
\item Transmit signal $\bx\in\setC^{M}$ is given by
\begin{align}
\bx = \left[ \bx_1^\trp,\ldots,\bx_K^\trp \right]^\trp
\end{align}
where $\bx_k\in\setC^{M_{\rm u}}$ represents the transmit signal of user $k$ and is constructed by mapping the information symbols of the user to symbols from the constellation set.
\item $\bn\in \setC^{N}$ is a complex-valued \ac{iid} random vector whose entries are Gaussian with zero-mean and variance $\sigma^2$, i.e. $\bn \sim\mathcal{CN}\brc{\boldsymbol{0}, \sigma^2 \mI_N}$.
\item $\by\in\setC^{N}$ denotes the receive signal whose entries are in general mutually coupled.
\end{enumerate}

It is assumed that the system operates in the \ac{tdd} mode, which is typical for massive \ac{mimo} systems. The uplink channels are estimated prior to data transmission by sending $K$ orthogonal pilot sequences within the \textit{training phase}. To keep the analysis tractable, we neglect the impact of estimation errors and assume that the \ac{csi} is perfectly available at the \ac{bs}.

\subsection{Spatial Modulation}
\label{sec:SM}
To construct transmit signals $\bx_1,\ldots,\bx_K$ from the information symbols, the users employ \ac{masm}: Each user selects a subset of $L_{\rm u}$ transmit antennas and sends $L_{\rm u}$ independent modulated symbols over them using a standard modulation scheme, e.g. \ac{psk}. The key difference to the conventional transmission is that the index of the selected subset is further specified by data symbols, and hence it carries information.

To illustrate the modulation scheme precisely, let $\bdd_k$ denote the sequence of information symbols being sent by user $k$. Assume $\bdd_k$ is an \ac{iid}\footnote{We assume that the information symbols are interleaved, such that the temporal correlation is negligible.} binary sequence with uniform distribution. The \ac{rf} chains at the user terminals are assumed to transmit complex signals whose constellation points are taken from $\setS$. We assume that $\setS$ contains $2^S$ distinct points, which is the case in practice, e.g. in \ac{psk}. Given the information sequence $\bdd_k$, user $k$ constructs $x_k$ as follows:
\begin{enumerate}
\item \textit{Codebook generation:} Consider all possible tuples of $L_{\rm u}$ antennas selected out of $M_{\rm u}$ available antennas at each user. A subset of $2^I$ distinct tuples is selected, where
\begin{align}
I = \left\lfloor \log {M_{\rm u}\choose L_{\rm u}} \right\rfloor.
\end{align}
To each of these subsets a \textit{modulation index} is assigned. We refer to this indexed subset as the \textit{codebook}.
\item \textit{Index modulation:} For given sequence $\bdd_k$, user $k$ chooses modulation index $i_k\in [ 2^I ]$ from the codebook, such that the first $I$ bits of $\bdd_k$ are the binary representation of $i_k$. 
\item \textit{Modulating multiple streams:} Over the active antennas selected by index $i_k$, user $k$ transmits $s_{k}\brc{m} \in \setS$, with $m\in\setL\brc{i_k}$, where $\setL \brc{i_k} \subseteq [M_{\rm u}]$ denotes the subset of $L_{\rm u}$ antennas which correspond to modulation index $i_k$ and is referred to as the \textit{active support} of user $k$.
\end{enumerate}
The $m$-th transmit entry of user $k$, i.e. $x_{k,m}$ for $m\in [M_{\rm u}]$, is therefore written as
\begin{align}
x_{k,m}=
\begin{cases}
s_{k}\brc{m} & m\in \setL\brc{i_k}\\
0 & m\notin \setL\brc{i_k}
\end{cases}. \label{eq:x_km}
\end{align}

From \eqref{eq:x_km}, it is concluded that $\bx_k$ is an $L_{\rm u}$-sparse vector, i.e. only $L_{\rm u}$ entries are non-zero. The transmit signal $\bx$ is hence an $L$-sparse vector, where $L=K L_{\rm u}$. This means that the \textit{sparsity factor} of the transmit signal is
\begin{align}
\eta = \frac{\norm{\bx}_0}{M} = \frac{L}{M} = \frac{L_{\rm u}}{M_{\rm u}}.
\end{align}

It is worth to indicate that the transmit signal is in general not \ac{iid} distributed. In fact, by index modulation, the entries of $\bx_k$ become statistically dependent. Furthermore, following the asymmetry imposed by codebook generation, the marginal distributions of transmit entries are not necessarily identical. This point is clarified through the following toy-example.

\begin{example}
	\label{ex:1}
Consider a scenario with a single user equipped with $M_{\rm u} = 5$ antennas and $L_{\rm u} = 2$ \ac{rf} chains. We assume \ac{bpsk} transmission over the active antennas, i.e. $S=1$. For this setting, $I= 3$ which means that the codebook consists of $8$ distinct pairs of antennas each indexed by a modulation index from $0$ to $7$. Let the codebook be 
\begin{align}
\mac = \set{ \brc{j,\ell}: \  j\in\set{1,2} \;  \mathrm{and} \;  j < \ell \leq 5  } \cup \set{\brc{3,4}}
\end{align}
with some indexing. The user in this case maps a binary symbol sequence of length $I+L_{\rm u} = 5$ into a transmit signal $\bx$ where the first $I=3$ symbols specify the index of active antennas, and the remaining $L_{\rm u} = 2$ bits are transmitted via the selected antennas using the \ac{bpsk} constellation. Let us define conditional distribution $ \rmq_m \brc{ x \vert \bx_0 } $ for $m \in \dbc{5}$ as
\begin{align}
\rmq_m \brc{ x \vert \bx_0 } \coloneqq \Pr\set{x_m = x \left\vert \bx_{\backslash m} = \bx_0 \right. }
\end{align}
where $x_m$ is entry $m$ of $\bx$, and $\bx_{\backslash m}$ denotes a four-dimensional vector constructed from $\bx$ by dropping $x_m$, respectively. 

For this setting, it is straightforward to write
\begin{align}
\rmq_m \brc{ x \vert \bx_0 } = 
\begin{cases}
\mone\set{ x = 0 } & \norm{\bx_0}_0 = 2\\
\dfrac{1}{2} \mone\set{ x = \pm 1 } & \norm{\bx_0}_0 \neq 2\\
\end{cases}
\end{align}
which indicates that the entries of $\bx$ are mutually dependent. Given the codebook, we can write
\begin{align}
\rmp_{x_1} \brc{ x } = 
\begin{cases}
0.5 & x=0\\
0.25 & x=\pm 1\\
\end{cases}
\qquad \mathrm{and} \qquad
\rmp_{x_5} \brc{ x } = 
\begin{cases}
0.75 & x=0\\
0.125 & x=\pm 1\\
\end{cases}
\end{align}
where $\rmp_{x_m} \brc{ x }$ denotes the marginal distribution of $x_m$. This observation indicates that the entries of $\bx$ are not identically distributed\footnote{The necessary condition for identical marginal distributions is that the total number of tuples be an integer power of 2. This is however not the case for many choices of $M_{\rm u}$  and $L_{\rm u}$.}.
\end{example}

\subsection{RLS-based Detection Algorithms}
The \ac{bs} detects the transmit signal from the received signal by employing a generic \ac{rls}-based detection algorithm. This algorithm first determines a soft estimation of the transmit signal, for the given \ac{csi}, using the \ac{rls} recovery
\begin{align}
\RLS{\setX}{\by \vert \mH} \coloneqq \argmin_{\bv\in\setX_0^M} \left. \norm{\by-\mH \bv}^2 + f_\reg \brc{\bv} \right. . \label{eq:RLS_vector}
\end{align}
Here, $\setX$ is a superset\footnote{$\setX$ is usually set to the convex hull of $\setS$.} of the constellation set $\setS$, i.e., $\setS\subseteq \setX$, and $f_\reg \brc\cdot$ is some regularization function. The soft estimation is hence determined as 
\begin{align}
\bx^\star = \RLS{\setX}{\by \vert \mH}
\end{align}
The detected signal is then given by mapping the soft estimation to a vector in $\setS_0^M$. This means
\begin{align}
\hat{\bx} = f_\dec \brc{\bx^\star}
\end{align}
where $f_{\rm dec} \brc{\cdot}: \setX_0^M \mapsto \setS_0^M$ is a decision function, e.g. the sign function.
\subsection{Special Forms of RLS-Detectors}
The considered recovery scheme includes a large scope of detection algorithms. From the Bayesian points of view, the algorithm reduces to the \textit{optimal} detector, i.e. \ac{map} detector, by setting $f_{\rm dec} \brc{\cdot}$ to be the identity function, i.e. $\hat{\bx} = \bx^\star$,  $\setX=\setS$ and 
\begin{align}
f_\reg \brc{\bv} = - \sigma^2 \loge \rmp_{\bx} \brc{\bv} \label{eq:prior}
\end{align}
with $\rmp_\bx$ denoting the prior distribution of $\bx$ imposed by the statistical model of data. By changing $f_\reg \brc\cdot$, while keeping $\setX=\setS$ and $f_{\rm dec} \brc{\cdot}$ to be the identity function, the algorithm is interpreted as a \textit{mismatched} \ac{map} detector. 

For $\setX=\setS$ and large $M$, the algorithm is computationally intractable, and hence, it is infeasible to implement it, in practice. As a result, convex forms of the algorithm are often used in practice. A well-known example of such detectors is the \textit{box-constrained \ac{lasso}}. In the box-constrained \ac{lasso}, $\setX$ is set to a convex set which contains the constellation points. The regularization term is set proportional to the $\ell_1$-norm of $\bx$ which approximates the sparse prior of the transmit signal with a Laplace distribution. Using the \ac{rls} recovery algorithm, the soft estimation of the  transmit signal contains entries which are either zero or a complex number in $\setX$. For decision, $f_{\rm dec} \brc{\cdot}$ is set to be an entry-wise hard thresholding operator which maps the points within a certain decision region to a corresponding constellation point. An example of the box-constrained \ac{lasso} is given below:
\begin{example}
Assuming a binary \ac{psk} transmission, the constellation set is $\setS = \{-\sqrt{P}, +\sqrt{P}\}$, for some positive real $P$. In this case, a possible choice for $\setX$ is $\setX = [-B,B]$ for some $B \geq \sqrt{P}$. The soft estimation is then given by
\begin{align}
\bx^\star \coloneqq \argmin_{\bv\in [-B,B]^M} \left. \norm{\by-\mH \bv}^2 + \lambda \norm{\bv}_1 \right. .
\end{align}
The detected signal is given by setting the $M-L$ entries of $\bx^\star$ with smallest absolute values zero and detecting the rest as $\hat{x}_m = \sqrt{P} \left. \sign{x^\star_m} \right.$. An equivalent representation of the decision function in this case is 
\begin{align}
f_{\rm dec} \brc{ x } = \sqrt{P} \left. \sign{x} \right. \mone\set{\abs{x} > \epsilon}
\end{align}
for some $\epsilon \in [0,B]$, where $\mone\set{\cdot}$ is the indicator function.
\end{example}
\subsection{Performance Measures}
Using spatial modulation, the data rate is increased by $I$ bits per transmission compared to the conventional antenna selection approach \cite{asaad2018massive}. In fact, the data rate per user in this case is
\begin{align}
R_{\rm u} = I + L_{\rm u} S = I + \eta M_{\rm u} S.
\end{align}
This increase is obtained at the expense of reducing the diversity gain. To characterize~the~performance of this transmission technique over the noisy channel, a distortion metric is further considered. The common metric is the average error rate which is defined as the probability of bit flips averaged over the block size, i.e. $M$. Nevertheless, for a general case, the distortion can be determined with respect to a generic distortion measure.

We characterize the performance by defining two metrics, namely the \textit{per-antenna transmit rate} and \textit{  average distortion}, as following:
\begin{definition}[Per-antenna transmit rate]
For each user in the network, the per-antenna transmit rate is defined as
\begin{align}
\bar{R}   \coloneqq \frac{R_{\rm u}}{M_{\rm u}}.
\end{align}
\end{definition}

$\bar{R}$ determines the number of bits achieved per transmit antenna, regardless of whether it is active or passive. To characterize the distortion, we consider a generic metric which includes conventional distortion metrics. 

\begin{definition}[Average distortion]
\label{dist}
Consider the soft estimation $\bx^\star$. The   average distortion is defined as
\begin{align}
D \brc{M} \coloneqq \frac{1}{M} \sum_{m=1}^M \Ex{ F_{\rm D} \brc{x^\star_m;x_m} }{} \label{eq:dist}
\end{align}
for some distortion function $F_{\rm D} \brc{\cdot;\cdot}: \setX_0 \times \setS_0 \mapsto \setR$. 
\end{definition}

Definition~\ref{dist} reduces to several distortion metrics including the conventional error probability and the \ac{mse}. In fact, by setting 
\begin{align}
F_{\rm D} \brc{x^\star_m;x_m} = \mone \set{ f_\dec \brc{x^\star_m} \neq x_m }, \label{eq:error_func}
\end{align}
$D\brc{M}$ calculates the average error rate. An alternative is
\begin{align}
F_{\rm D} \brc{x^\star_m;x_m} = \abs{x^\star_m - x_m}^2, \label{eq:MSE_Dist}
\end{align}
which determines the average \ac{mse} of the soft estimation.

\subsection{Asymptotic Analysis}
We study the system performance in the asymptotic regime. To this end, we assume a sequence of settings indexed by $N$, such that the number of transmit antennas per receive antenna is fixed. This means that $N$ grows large in this sequence, while
\begin{align}
\xi = \frac{M}{N}
\end{align}
is kept fixed. We refer to $\xi$ as the \textit{effective load}. For this sequence of settings, a corresponding sequence of performance metrics, with respect to a given measure of performance, is derived. Each entry of this new sequence gives the performance metric of its corresponding setting in the former sequence. The asymptotic performance is then characterized by the limit of this sequence when $N$ tends to infinity.

In general, the asymptotic performance is interpreted in two different ways:
\begin{enumerate}
\item It describes a case in which a massive number of ordinary multi-antenna users, with small antenna arrays, transmit uplink signals to a \ac{bs} with a large antenna array. In this case, the effective load is
\begin{align}
\xi = \frac{K M_{\rm u}}{N} = \left. \alpha \right. M_{\rm u}.
\end{align}
where $\alpha \coloneqq K/N$ is the \textit{system load}.
\item An alternative interpretation is given by considering a scenario in which few sophisticated terminals communicate with a \ac{bs}. In this case, the system load converges asymptotically to zero, and $M_{\rm u} \gg K$. The effective load is hence given by
\begin{align}
\xi = \frac{K M_{\rm u}}{N} = K \xi_{\rm u},
\end{align}
where $\xi_{\rm u} \coloneqq {M_{\rm u}}/{N}$ is the \textit{per-user load}.
\end{enumerate}
The results of this work address both interpretations. Nevertheless, whenever needed, we refer to these interpretations as \textit{massive user case and massive array case}, respectively.

\section{Large-system Characterization}
\label{sec:Large}
In this section, we derive the asymptotic limits of the performance metrics, i.e., the per-antenna transmit rate and the average distortion. For the per-antenna rate, the exact value of the metric is calculated for any dimension. The study of this metric aims to characterize the rate growth in terms of the transmit array size, asymptotically. We follow the derivation in this case by a standard algebraic approach.

In contrast to the per-antenna rate, the average distortion is not necessarily determinable in a tractable way. In this respect, the large-system analysis of this metric intends to calculate the limiting value of the distortion using some advanced analytical tools. To this end, we invoke the replica method which has been developed in the context of statistical mechanics, and is accepted as an analytical tool in information theory and signal processing.

\subsection{Per-antenna transmit rate}
Theorem~\ref{th1} describes variations of the per-antenna transmit rate with respect to the transmit array size at user terminals, i.e. $M_{\rm u}$. This result does not consider the large-system limit, and is valid for any dimension.
\begin{theorem}
\label{th1}
For any transmit array size $M_{\rm u}$, there exists a constant $C\in \left( C_{\rm d} , C_{\rm u} \right]$, such that the per-antenna transmit rate is given by
\begin{align}
\bar{R} =  \left. \eta \right.  S + H_2\brc{\eta} + \frac{C - \log M_{\rm u} }{2M_{\rm u}}  \label{eq:Theorem1}
\end{align}
where
\begin{subequations}
\begin{align}
C_{\rm d} &= \log \frac{\pi}{2e^4}- \log \brc{\eta-\eta^2}\\
C_{\rm u} &= \log \frac{e^2}{4\pi^2}- \log \brc{\eta-\eta^2}.
\end{align}
\end{subequations}
\end{theorem}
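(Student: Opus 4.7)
The plan is to read off $\bar R$ directly from the definition of $R_{\rm u}$, identify the constant $C$ by matching it against the claimed formula, and then obtain the bounds on $C$ by combining two-sided Stirling-type estimates on $\binom{M_{\rm u}}{L_{\rm u}}$ with the defining inequalities of the floor function.

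First, I would use $R_{\rm u} = I + L_{\rm u} S$ together with $L_{\rm u} = \eta M_{\rm u}$ to write $\bar R = \eta S + I/M_{\rm u}$. Matching this against the target identity $\bar R = \eta S + H_2(\eta) + (C - \log M_{\rm u})/(2 M_{\rm u})$ uniquely determines
\[
C \;=\; 2 I \,-\, 2 M_{\rm u} H_2(\eta) \,+\, \log M_{\rm u},
\]
so the task reduces to establishing $C_{\rm d} < C \leq C_{\rm u}$ for every admissible pair $(M_{\rm u}, L_{\rm u})$ with $0 < L_{\rm u} < M_{\rm u}$ (the degenerate cases $\eta \in \{0,1\}$ being naturally excluded by the $-\log(\eta - \eta^2)$ term in $C_{\rm d}$ and $C_{\rm u}$).

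The central tool is the classical two-sided Stirling-type bound $\sqrt{2\pi}\, n^{n+1/2} e^{-n} \leq n! \leq e \cdot n^{n+1/2} e^{-n}$, which holds for every integer $n \geq 1$. Applying the lower bound to $M_{\rm u}!$ and the upper bounds to $L_{\rm u}!$ and $(M_{\rm u} - L_{\rm u})!$ (and the reverse combination for the opposite direction), and simplifying via the identities $n^n/(k^k(n-k)^{n-k}) = 2^{n H_2(k/n)}$ and $\sqrt{n/(k(n-k))} = 1/\sqrt{n \eta(1-\eta)}$ with $n = M_{\rm u}$ and $k = L_{\rm u}$, would yield
\[
\frac{2\pi}{e^2} \cdot \frac{2^{M_{\rm u} H_2(\eta)}}{\sqrt{2\pi M_{\rm u}\,\eta(1-\eta)}} \;\leq\; \binom{M_{\rm u}}{L_{\rm u}} \;\leq\; \frac{e}{\sqrt{2\pi}} \cdot \frac{2^{M_{\rm u} H_2(\eta)}}{\sqrt{2\pi M_{\rm u}\,\eta(1-\eta)}}.
\]

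Taking the binary logarithm of these estimates and combining with the floor relation $\log\binom{M_{\rm u}}{L_{\rm u}} - 1 < I \leq \log\binom{M_{\rm u}}{L_{\rm u}}$ produces matching two-sided bounds on $I$. Substituting into the defining expression for $C$ and simplifying would collapse the lower bound exactly to $\log(\pi/(2 e^4)) - \log(\eta - \eta^2)$ and the upper bound to $\log(e^2/(4\pi^2)) - \log(\eta - \eta^2)$, which are precisely $C_{\rm d}$ and $C_{\rm u}$. The main obstacle is nothing more than the bookkeeping of the logarithmic constants through this final simplification; no conceptual difficulty arises. The strict inequality $C > C_{\rm d}$ is ensured by the strictness of the floor inequality $I > \log\binom{M_{\rm u}}{L_{\rm u}} - 1$, or, equivalently, by the strictness of the Stirling lower bound for any finite $M_{\rm u}$.
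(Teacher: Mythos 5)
Your proposal is correct and follows essentially the same route as the paper's Appendix~\ref{app:A}: the identical two-sided Stirling bounds on the factorials, the same factorization of $\binom{M_{\rm u}}{L_{\rm u}}$ into the $H_2\brc{\eta}$ and $\sqrt{M_{\rm u}\,\eta\brc{1-\eta}}$ pieces, and the same use of the strict floor inequality to pin $C$ in $\left(C_{\rm d},C_{\rm u}\right]$. The only cosmetic difference is that you solve for $C$ explicitly, whereas the paper packages the final step as the ``method of intervals'' applied to the linear function $f_{\rm fix}$.
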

\begin{proof}
Using Stirling's formula, it is shown in Appendix~\ref{app:A} that $I$ is bounded as
\begin{align}
I_{\rm d} < I \leq I_{\rm u},
\end{align}
where $I_{i}$ for $i \in \set{\mathrm{d},\mathrm{u}}$ are given by
\begin{subequations}
	\begin{align}
	I_{i} &\coloneqq \frac{C_{i} - \log M_{\rm u}}{2 }  + M_{\rm u} H_2 \brc{\eta}.
	\end{align}
\end{subequations}
Invoking these upper and lower bounds, the result in \eqref{eq:Theorem1} is concluded straightforwardly using the method of intervals. The detailed derivations are given in Appendix~\ref{app:A}.
\end{proof}

Theorem~\ref{th1} depicts that the per-antenna transmit rate tends to $\left. \eta \right.  S + H_2\brc{\eta}$, as the number of transmit antennas at user terminals grows large. Hence, the gain achieved via index modulation, compared to simply applying antenna selection, is approximately $H_2\brc{\eta}$. For the massive array case, this gives an accurate approximation, while in the massive user case, it requires further modification via the residual terms.

Theorem~\ref{th1} characterizes the \textit{rate loss} caused by deactivating antennas at the user terminals \cite{cakmak2018capacity}. In fact, by considering the case with full transmit complexity as the reference, one observes \ac{masm} reduces the rate loss by $H_2\brc{\eta}$, compared to the antenna selection technique.

\subsection{Average distortion}

Using asymptotic characterization via the replica method, we derive a closed-form expression for the asymptotic average distortion. The validity of the results depends on the \textit{replica continuity} and \textit{replica symmetry} assumptions. Despite the lack of concrete theoretical proofs, various results in the literature confirm the validity of these assumptions in this setting.

We start representing the large-system result by defining a \textit{decoupled setting} of the under-study system. This setting is a tunable scalar system whose  average distortion, for any choice of tuning factors, is analytically calculated. Our main result indicates that for a specific choice of the tuning factors, the average distortion of the decoupled setting is equal to the  average distortion in the original system.

\begin{definition}[Decoupled setting]
\label{def:Dec}
Let $\rmp_\mH \brc{\vartheta}$ be the asymptotic singular value decomposition of the channel matrix $\mH$ whose $\rmR$-transform is $\rmR \brc{\cdot}$. For tuning parameters $c$ and $q$, define 
\begin{subequations}
\begin{align}
\tau \brc{c} &\coloneqq\frac{1}{ \rmR \brc{-c} } \label{eq:rs-4a} \\
\theta\brc{c,q} &\coloneqq \frac{1}{ \rmR \brc{-c} }, \sqrt{\frac{\partial}{\partial c} \left[ \brc{ \sigma^2 c - q } \rmR \brc{-c} \right]}. \label{eq:rs-4b}
\end{align}
\end{subequations}
Let $\xx = \psi \rms$, where $\rms$ is a uniform random variable on $\setS$, and $\psi$ is a Bernoulli random variable, independent of $\rms$, with
\begin{align}
\Pr\set{\psi=1} = 1- \Pr\set{\psi=0} = \eta.
\end{align}
Then, the decoupled output $\yy\brc{c,q}$ is defined as 
\begin{align}
\yy\brc{c,q} = \xx +  \theta \brc{c,q} \zz
\end{align}
with $\zz\sim\mathcal{CN} \brc{0, 1 }$. The decoupled \ac{rls} estimation is further given by
\begin{align}
\xx^\star\brc{c,q} = \rls{\setX}{\yy\brc{c,q} \vert \tau \brc{c} },
\end{align}
where the scalar \ac{rls} recovery algorithm $\rls{\setX}{\cdot\vert\tau}$ for a given $\tau$ is defined as
\begin{align}
\rls{\setX}{\yy \vert \tau} \coloneqq \argmin_{\vv\left. \in\right. \setX_0} \left. \frac{1}{\tau}\right. \left. \abs{\yy- \vv}^2 + f_\reg \brc{\vv} \right. . \label{eq:decop_rls}
\end{align}
The decoupled distortion is moreover calculated as
\begin{align}
D_\asy \brc{c,q} \coloneqq \Ex{ F_{\rm D} \brc{\xx^\star\brc{c,q};\xx} }{ }.
\end{align}
\end{definition}

Definition~\ref{def:Dec} defines $\yy\brc{c,q}$ which is the output of a scalar \ac{awgn} channel. The input to this channel is $\xx$ whose distribution describes the empirical distribution of a significantly large transmit vector $\bx$. The noise variance of the channel is controlled with tuning factors $c$ and $q$. From this decoupled output, the estimation $\xx^\star\brc{c,q}$ is calculated, which is an \ac{rls} recovery of $\xx$ from $\yy\brc{c,q}$ via the scalar \ac{rls} estimator $\rls{\setX}{\cdot\vert\tau\brc{c,q} }$. The distortion term $D_\asy \brc{c,q}$ determines the mean distortion between decoupled estimation $\xx^\star\brc{c,q}$ and input $\xx$.

For any choice of $\setX$ and $f_\reg \brc\cdot$, the derivation of \ac{rls} estimation in the decoupled setting, i.e. $\xx^\star\brc{c,q}$, deals with solving a scalar program. Hence, in contrast to the original setting, the average distortion of \ac{rls} recovery is analytically tractable for all choices of $\setX$ and $f_\reg \brc\cdot$ in this case. Theorem~\ref{proposition:RS} indicates that for specific choices of $c$ and $q$, $D_\asy \brc{c,q}$ gives the asymptotic average distortion in the original setting. The values of $c$ and $q$, for which this equivalency happens, are given in the following theorem via a system of fixed-point equations.

\begin{theorem}
\label{proposition:RS}
Consider the multiuser \ac{mimo} setting in Section~\ref{sec:sys}, and let the receive signal be detected via an \ac{rls}-based detector with $\setX\subseteq \setC$. Assume that the technical conjectures for validity of the replica symmetric solution, including replica continuity and replica symmetry, hold. Then,
\begin{align}
\lim_{M\uparrow \infty} D \brc{M} = D_\asy \brc{c^\star,q^\star}
\end{align}
where $c^\star$ and $q^\star$ are solutions to the fixed-point equations
\begin{subequations}
\begin{align}
c \left. \theta\brc{c,q} \right.  &= \left. \tau\brc{c} \right. \Ex{ \real{\brc{\xx^\star\brc{c,q} - \xx } \zz^*} }{} \label{eq:rs-6a} \\
q &= \Ex{ \abs{\xx^\star\brc{c,q} - \xx}^2  }{}. \label{eq:rs-6b}
\end{align}
\end{subequations}
\end{theorem}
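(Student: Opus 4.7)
The plan is to compute the asymptotic average distortion by applying the replica method to the free energy of a Gibbs measure whose zero-temperature limit recovers the RLS estimator. Under the stated replica continuity and replica symmetry assumptions, the decoupled scalar channel of Definition~\ref{def:Dec} and the fixed-point equations \eqref{eq:rs-6a}-\eqref{eq:rs-6b} arise as the saddle-point conditions of the resulting variational problem.

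First I would introduce, for inverse temperature $\beta>0$, the Gibbs measure
\begin{align}
q_\beta(\bv \vert \by,\mH) = \frac{1}{Z_\beta(\by,\mH)} \exp\!\left[-\beta \brc{ \norm{\by-\mH\bv}^2+f_\reg(\bv) } \right], \quad \bv\in\setX_0^M,
\end{align}
so that $\bx^\star=\lim_{\beta\to\infty}\Ex{\bv}{q_\beta}$ almost surely whenever the RLS minimizer is unique. A per-entry distortion $D(M)=\frac{1}{M}\sum_m\Ex{F_{\rm D}(\xx_m^\star;x_m)}{}$ is then recovered from the free energy $\mathcal{F}(\beta)=-\lim_{M\to\infty}\frac{1}{\beta M}\Ex{\loge Z_\beta}{}$ by inserting an auxiliary source term $h\,F_{\rm D}$ into the Hamiltonian and differentiating at $h=0$, which is the standard route for turning the problem into one of computing a free energy.

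Next I would invoke the replica identity $\Ex{\loge Z_\beta}{}=\lim_{n\downarrow 0}\frac{1}{n}\loge\Ex{Z_\beta^n}{}$, justified by replica continuity. For integer $n$ the replicated partition function $Z_\beta^n$ is an $n$-fold integral over replicas $\bv_1,\ldots,\bv_n$; the AWGN noise is integrated out in closed form, and the Haar average over the right-unitary factor $\mV$ in the SVD \eqref{svd} of $\mH$ is evaluated via a spherical integral of HCIZ type whose large-$M$ asymptotics inject the $R$-transform $\rmR(\cdot)$ of the limiting singular-value distribution of $\mH$. Introducing overlap order parameters $Q_{ab}=\frac{1}{M}\bv_a^\her\bv_b$ and their conjugates decouples the quadratic forms and reduces the task to an optimization in these parameters. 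Imposing the replica-symmetric ansatz $Q_{ab}=q$ for $a\neq b$ and $Q_{aa}=q+\tau/\beta$ then decouples the $M$-dimensional problem into $M$ independent scalar problems, each being exactly the single-letter Gibbs measure associated with the scalar RLS program \eqref{eq:decop_rls} on an AWGN channel with effective noise variance $\theta^2(c,q)$. Sending $\beta\to\infty$ collapses each scalar Gibbs measure onto $\xx^\star(c,q)$, and extremizing the replica-symmetric free energy in the remaining parameters $(c,q)$ yields the fixed-point conditions \eqref{eq:rs-6a}-\eqref{eq:rs-6b} with $\tau(c)$ and $\theta(c,q)$ as in \eqref{eq:rs-4a}-\eqref{eq:rs-4b}. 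The distortion identity $\lim_{M\uparrow\infty}D(M)=D_\asy(c^\star,q^\star)$ follows by averaging $F_{\rm D}$ under the decoupled scalar law. Because Example~\ref{ex:1} shows that the entries of $\bx$ are neither independent nor identically distributed within a user block, I would import the asymmetric MAP framework of \cite{bereyhi2018MAP} so that the effective single-letter prior in the decoupled channel is the limiting empirical distribution of the entries of $\bx$, which for MA-SM is exactly $\xx=\psi\rms$ with $\Pr\{\psi=1\}=\eta$; within-block dependencies wash out in the replica-symmetric limit.

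The hard part will be justifying the two non-rigorous ingredients on which the replica method rests: the analytic continuation $n\to 0$ and the exclusion of replica-symmetry breaking. Neither has a known proof in this generality, and I would defend them by cross-checking against the Convex Gaussian Min-max Theorem in the convex special cases (such as box-LASSO) and against AMP state-evolution in the i.i.d.\ Gaussian sub-case. The remaining technical work—evaluating the spherical integral cleanly, propagating the $R$-transform through the saddle-point equations, and handling the asymmetric spatial-modulation prior—is by now standard given the machinery developed in \cite{bereyhi2018MAP}, so the genuinely novel effort is concentrated in carrying the non-i.i.d.\ prior of MA-SM through the replica-symmetric evaluation without breaking the decoupling.
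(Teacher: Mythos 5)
Your proposal is sound in its overall architecture, but it takes a genuinely different and much heavier route than the paper. You propose to redo the entire replica computation from scratch: Gibbs measure, replica identity, HCIZ-type spherical integral over the Haar factor to inject the $\rmR$-transform, overlap order parameters, RS ansatz, and the zero-temperature limit. The paper does none of this. Its proof treats the replica machinery as already done: it cites Proposition~1 of \cite{bereyhi2018MAP} (the asymmetric decoupling property of \ac{map} estimation for signals composed of i.i.d.\ blocks) as a black box, and the entire content of the proof is the reduction of the \ac{masm} prior to the hypotheses of that proposition. Concretely, the paper conditions on the vector of modulation indices $\bi$; conditioned on $\bi$, the transmit signal is \emph{exactly} two i.i.d.\ blocks --- the $\brc{1-\eta}M$ entries outside $\Sp\bi$ are deterministically zero and the $\eta M$ entries in $\Sp\bi$ are i.i.d.\ uniform on $\setS$ --- so the cited decoupling result applies verbatim, yielding $D\brc{\bi} = \eta\,\Ex{F_{\rm D}\brc{\hat\rms;\rms}}{} + \brc{1-\eta}\Ex{F_{\rm D}\brc{\hat\xx_0;0}}{} = D_\asy\brc{c^\star,q^\star}$; since this is constant in $\bi$, averaging over $\bi$ finishes the proof. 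What your route buys is self-containedness and a visible derivation of where $\tau$, $\theta$ and the fixed-point equations come from; what the paper's route buys is brevity and a clean isolation of the only genuinely new step.

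That new step is also the one soft spot in your write-up. You assert that ``within-block dependencies wash out in the replica-symmetric limit'' and that the effective single-letter prior is the limiting empirical distribution of the entries of $\bx$. As stated, this is an unjustified claim: a replica computation with a genuinely dependent, non-identically-distributed prior does \emph{not} automatically factorize into single-letter problems, and Example~\ref{ex:1} shows the unconditional law of $\bx$ is exactly of that awkward type. The correct mechanism is not that the dependencies wash out but that they are removed exactly by conditioning on $\bi$, after which independence holds by construction and the fraction of nonzero entries is deterministically $\eta$ for every realization of $\bi$ (which is also what makes the final answer independent of $\bi$). You should replace the ``wash out'' sentence with this conditioning argument; with that repair your proof is complete and arrives at the same fixed-point equations \eqref{eq:rs-6a}--\eqref{eq:rs-6b}.
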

\begin{proof}
The proof follows the asymmetric decoupling property of \ac{map} estimation investigated in \cite{bereyhi2018MAP}. We start by rewriting the average distortion as 
\begin{align}
D &= \Ex{D\brc{\bi}  }{\bi}
\end{align}
where $\bi \coloneqq \left[ i_1 , \ldots, i_K\right]^\trp$ is the vector of modulation indices, and
\begin{align}
D\brc{\bi} &\coloneqq \frac{1}{M} \sum_{m=1}^M \Ex{ F_{\rm D} \brc{x^\star_m;x_m} \vert \bi }{}
\end{align}
is the average distortion for a given realization of modulation indices. Conditioned on a realization of $\bi$, the transmit signal consists of \textit{two non-identical blocks}\footnote{Note that by a \textit{block}, we mean the set of entries whose indices are in an index set $\setL\subseteq \dbc{M}$. The index set $\setL$ could be \textit{any} subset of $\dbc{M}$ and does not need to necessarily include adjacent integers.}: a block of length $\brc{1-\eta} M$ with all the entries being zero, and a block of length $\eta M$ whose entries are uniformly distributed on $\setS$. Let us denote the entry indices of the latter block, i.e. the block with $x_m\neq 0$, by $\Sp\bi \subseteq [M]$. It is hence clear that $\abs{\Sp\bi} = \eta M$. The asymmetric form of the decoupling principle, given in \cite{bereyhi2018MAP}, indicates that for any $m\in\Sp\bi$, as $M\uparrow\infty$, the conditional distribution $\rmp\brc{x^\star_m , x_m\vert \bi}$ converges to the joint distribution of $\brc{\hat{\rms} \brc{c^\star,q^\star},\rms}$, where
\begin{align}
\hat{\rms} \brc{c^\star,q^\star} \coloneqq \rls{\setX}{\rms + \theta\brc{c^\star,q^\star}\zz \vert \tau \brc{c^\star,q^\star} },
\end{align}
and $\rms$ is uniform on $\setS$. For $m\in\Sp{\bi}^{\rm C}$, $\rmp\brc{x^\star_m , x_m\vert \bi}$ converges to the distribution of $\brc{\hat{\xx}_0 \brc{c^\star,q^\star},0}$, where
\begin{align}
\hat{\xx}_0 \brc{c^\star,q^\star} \coloneqq \rls{\setX}{\theta\brc{c^\star,q^\star}\zz \vert \tau \brc{c^\star,q^\star} }.
\end{align}
Hence, $D\brc{\bi}$ is given by
\begin{subequations}
\begin{align}
D\brc{\bi} &= \frac{1}{M} \sum_{m=1}^M \Ex{ F_{\rm D} \brc{x^\star_m;x_m} \vert \bi }{}\\
&= \frac{1}{M} \left[ \sum_{m \left. \in \right. \Sp\bi} \Ex{ F_{\rm D} \brc{x^\star_m;x_m} \vert \bi }{} +  \sum_{m \left. \in \right. \Sp\bi^{\rm C}} \Ex{ F_{\rm D} \brc{x^\star_m;x_m} \vert \bi }{} \right] \\
&= \frac{1}{M} \left[ \left. \eta M \right. \Ex{ F_{\rm D} \brc{ \hat{\rms} \brc{c^\star,q^\star} ;\rms} }{}   + \left. \brc{1-\eta} M \right. \Ex{ F_{\rm D} \brc{\hat{\xx}_0 \brc{c^\star,q^\star};0} }{} \right] \\
&= \Ex{ F_{\rm D} \brc{ \xx^\star \brc{c^\star,q^\star} ;\psi\rms} }{}\\
&=D_\asy \brc{c^\star,q^\star}.
\end{align}
\end{subequations}
As $D\brc{\bi}$ is constant in $\bi$, we infer that $D=D_\asy \brc{c^\star,q^\star}$ which concludes the proof. The detailed derivations are given in Appendix~\ref{app:B}.
\end{proof}

\begin{remark}
	\label{remark:asymp}
	Comparing the result of Theorem~\ref{proposition:RS} to the asymptotic distortion of sparse recovery algorithms, e.g., \cite{bereyhi2016statistical}, one observes that these asymptotic characterizations are identical. This observation indicates that the earlier derivations based on the mismatched prior assumption of \ac{iid} sparse transmit signals closely approximate the performance.
\end{remark}

The asymptotic characterization enables us to study various aspects of \ac{masm} systems. In the sequel, we address two examples: First, we use the results to asymptotically characterize the optimal detector. We then study the recovery performance of box-constrained \ac{lasso} algorithms and discuss the optimal tuning strategy for these detectors. 

\section{Application I: Bounds on Optimal Error Rate}
\label{sec:App1}
From Bayesian points of view, the optimal detector, which minimizes the probability of erroneous detection, is the \ac{map} detector which recovers ${\bx}$ as
\begin{align}
\hat\bx &= \argmax_{\bv\in\setS_0^M} \left. \rmp_\bx\brc{\bv \vert \by,\mH} \right.
\end{align}
where $\rmp_\bx\brc{\cdot \vert \by,\mH}$ denotes the posterior distribution of transmit signal $\bx$ conditioned on receive signal $\by$ and \ac{csi} $\mH$. Given $\mH$, the receiver observes an \ac{awgn} channel. Hence, it is straightforward to show that the \ac{map} detector in this case reduces to
\begin{align}
\hat\bx &= \argmin_{\bv\in\setS_0^M} \left. \norm{\by-\mH \bv}^2 - \sigma^2 \ln \rmp_\bx\brc{\bv} \right.  \label{eq:MAP}
\end{align}
where $\rmp_\bx\brc{\cdot}$ is the prior distribution of transmit signal $\bx$. The \ac{map} detector in \eqref{eq:MAP} is interpreted as an \ac{rls}-based detector in which $\setX=\setS$, the regularization function is given by \eqref{eq:prior}, and the decision function is $f_{\dec} \brc{x} = x$. 

Given the equivalent \ac{rls} form of the \ac{map} detector, it is not straightforward to formulate it, since the prior distribution $\rmp_\bx\brc\cdot$ is not of a simple form; see Example~\ref{ex:1}. Nevertheless, in the large-system limit, the detector can be approximated by a mismatched detector whose \ac{rls} formulation has a simple form. We discuss this approximated form in the sequel.

\subsection{An Approximately Equivalent Mismatched MAP Detector}
\label{Sec:MAP-Aprox}
Despite the complicated form of the exact prior distribution, $\bx$ can be approximated by an \ac{iid} distribution in the large-system limit: As $M$ grows large, the transmit signal is approximately distributed as an the \ac{iid} sequence $\tilde{\xx}_m = \tilde{\psi}_m \tilde{\rms}_m$ for $m\in[M]$, where $\tilde{\psi}_m$ is a Bernoulli random variable with\footnote{One could show that for some random codebook generations, the transmit signal in massive array case converges in distribution to this \ac{iid} sequence.}
\begin{align}
\Pr\set{\tilde{\psi}_m =1} = 1- \Pr\set{\tilde{\psi}_m =0} = \eta
\end{align}
and $\tilde{\rms}_m$ is uniformly distributed on $\setS$. The consistency of this approximation in the large-system limit can be investigated via the \ac{aep}\footnote{We skip detailed discussions in this respect, as it is out of the scope of this study.}. In the sequel, we illustrate this approximation through an example.

\begin{example}
	\label{ex:3}
Consider a scenario with $K=10$ users, each equipped with $M_{\rm u} = 16$ antennas and $L_{\rm u} = 2$ \ac{rf} chains transmitting \ac{bpsk} symbols, i.e., $\setS = \set{ \pm1 }$. For this setting, $I=6$, and thus in each symbol interval $I+L_{\rm u} S = 8$ bits of information are transmitted by each user. The codebook is generated by a random selection of $2^I = 64$ distinct antenna pairs out of the available $120$ distinct pairs. This randomly generated codebook is shared among the users and the \ac{bs}.

For this setting, we numerically realize the transmit signal $J$ times with the given codebook. Realization $j$ is denoted by $\bx\brc{j}$. Given the realizations, we determine the following two statistics for transmit entry $m$:
\begin{enumerate}
\item The empirical distribution $\tilde{\rmp}_{m} \brc{x}$, defined as
\begin{align}
\hat{\rmp}_{m} \brc{x} = \frac{1}{J} \sum_{j=1}^J \mone\set{x_m\brc{j}= x}
\end{align}
with $x_m\brc{j}$ being the $m$-th entry of $\bx\brc{j}$.
\item The empirical $\brc{\ell,t}$ joint moment function, defined as
\begin{align}
\hat{\rho}^{\ell t}_m \brc{\delta} \coloneqq \frac{1}{J} \sum_{j=1}^J x_m(j)^\ell x_{m+\delta}\brc{j}^t
\end{align}
for $\delta \in \set{-m+1, \ldots, M-m}$ and integers $\ell$ and $t$.
\end{enumerate}

The given functions are numerical evaluations of the marginal distribution of entry $m$ and its pair-wise joint moments with other entries in the transmit signal. From the classical method of moments \cite{akhiezer1965classical}, one could argue that if the analytical terms for the marginal distributions, and all joint moments\footnote{This means not only the pair-wise joint moments, but also joint moments of more entries.} are equal to their corresponding functions in sequence $\set{\tilde{\xx}_m}$, for $m\in M$ and all moment exponents; then, the transmit entries and sequence $\set{\tilde{\xx}_m}$ have identical distributions. This constraint is however intractable to be checked, even numerically. We hence consider only the given empirical measures and compare them to their corresponding metrics in sequence $\set{\tilde{\xx}_m}$; namely, to distribution of $\tilde{\xx}_m$, shown by $\rmp_{\tilde{\xx}_m}\brc{x}$, and joint moment $\tilde{\rho}^{\ell t}_m \brc{\delta}$, defined as
\begin{align}
\tilde{\rho}^{\ell t}_m \brc{\delta} \coloneqq \Ex{ \tilde{\xx}_m^\ell \tilde{\xx}_{m+\delta}^t }{}.
\end{align}
	
Fig.~\ref{fig:prior} and \ref{fig:cor} show numerical results for $J=10^5$ realizations considering transmit entry $m=80$. In Fig.~\ref{fig:prior}, the empirical distribution is sketched showing close consistency with $\rmp_{\tilde{\xx}_m}\brc{x}$, even for moderate dimensions. 

Fig.~\ref{fig:cor} further shows joint moment functions for multiple choices of $\ell$ and $t$. In theory, $\tilde{\rho}^{\ell t}_m$ has three different forms, depending on the values of $\ell$ and $t$. Fig.~\ref{fig:cor} contains an empirical sample of each form. It is observed that the joint moments closely track $\tilde{\rho}^{\ell t}_m \brc{\delta}$ for the corresponding values of $\ell$ and $t$.
\end{example}

\begin{figure}[!t]
\centering
%
%
\begin{tikzpicture}

\begin{axis}[%
width=4in,
height=2.8in,
at={(1.989in,1.234in)},
scale only axis,
xmin=-1.5,
xmax=1.5,
xtick={-1,  0,  1},
xlabel style={font=\color{white!15!black}},
xlabel={$x$},
ymin=0,
ymax=1,
ytick={   0, 0.25,  0.5, 0.75,    1},
ylabel style={font=\color{white!15!black}},
ylabel={$\hat{\rmp}_{m} \brc{x}$},
axis background/.style={fill=none},
legend style={legend cell align=left, align=left, draw=white!15!black}
]
\addplot[forget plot,ycomb, color=blue, draw=none, mark size=5.0pt, mark=x, mark options={solid, blue}] table[row sep=crcr] {%
-1	0.06284\\
0	0.87604\\
1	0.06112\\
};
\addplot[forget plot, color=white!15!black, draw=none] table[row sep=crcr] {%
-1.5	0\\
1.5	0\\
};

\addplot[forget plot,ycomb, color=red, line width=1.0pt, mark size=3.5pt, mark=square, mark options={solid, red}] table[row sep=crcr] {%
-1	0.0625\\
0	0.875\\
1	0.0625\\
};
\addplot[forget plot, color=white!15!black, line width=1.0pt] table[row sep=crcr] {%
-1.5	0\\
1.5	0\\
};

\end{axis}
\end{tikzpicture}%
\caption{Empirical marginal distribution of transmit entry $m=80$. Numerical simulations are denoted by blue crosses closely tracking $\rmp_{\tilde{\xx}_m}\brc{x}$ shown by red squares. Here, the number of users is $K=10$. Each user has $M_{\rm u} = 16$ antennas and transmits \ac{bpsk} symbols over $L_{\rm u} = 2$ active antennas.}
\label{fig:prior}
\end{figure}
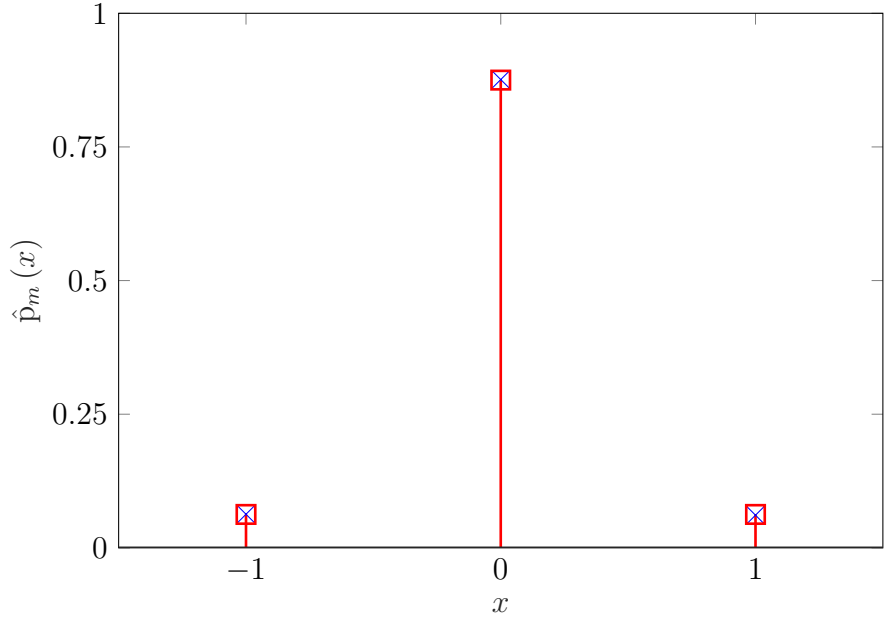

\begin{figure*}[!t]
\centering
\input{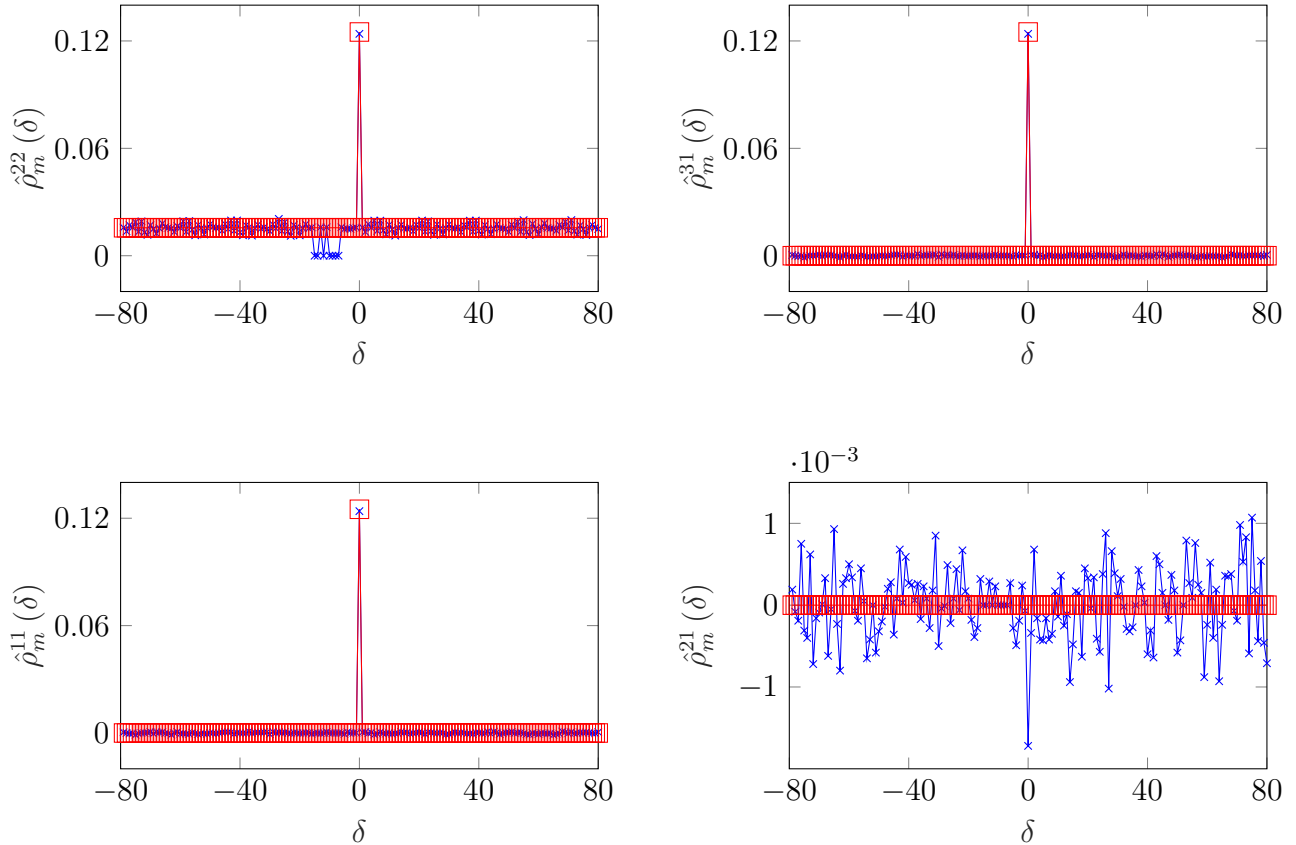}
\caption{Joint moment function of transmit entry $m=80$ for multiple choices of $\ell$ and $t$. The number of users is $K=10$, each having $M_{\rm u} = 16$ antennas and $L_{\rm u} = 2$ \ac{rf} chains. The \ac{bpsk} constellation is considered for transmission. The red squares denote $\tilde{\rho}^{\ell t}_m \brc{\delta}$ while blue crosses show numerical results.}
\label{fig:cor}
\end{figure*}

The observations in Example~\ref{ex:3} numerically justify the fact that the \ac{iid} sequence $\set{\tilde{\xx}_m}$ is a good approximation of the true transmit signal. Considering this finding, we approximate the optimal performance by the performance of a \textit{mismatched} \ac{map} detector which postulates the prior distribution
\begin{align}
\rmp_{\bx, \mathrm{pos}} \brc{\bv} = \prod_{m=1}^M \rmp_{\xx} \brc{v_m}, \label{eq:postulate}
\end{align}
for the transmit signal. Here, $\rmp_{\xx} \brc{\cdot}$ denotes the distribution of $\tilde{\xx}_m$\footnote{This distribution is equivalent to the distribution of the decoupled input $\xx$ given in Definition~\ref{def:Dec}.} and is given by
\begin{align}
\rmp_{\xx} \brc{v} = \brc{1-\eta} \mone\set{ v = 0 } + \left. 2^{-S} \eta \right. \mone\set{ v \neq 0 }.
\end{align}
Invoking Theorem~\ref{proposition:RS}, we characterize the performance of this mismatched detector in the sequel.

\subsection{Asymptotic Performance of the Mismatched MAP Detector}
\label{sec:MAP}
For the postulated prior distribution in \eqref{eq:postulate}, we have
\begin{subequations}
\begin{align}
-\sigma^2 \loge \rmp_{\bx, \mathrm{pos}} \brc{\bv} &= - \sigma^2 \sum_{m=1}^M \loge \rmp_{\xx} \brc{v_m}\\
&= a \norm{\bv}_0 + b
\end{align}
\end{subequations}
assuming that $0 < \eta < 1$, where $a$ and $b$ are given by
\begin{subequations}
\begin{align}
a &\coloneqq \sigma^2 \left[ S \loge  2 + \loge \brc{1-\eta} - \loge {\eta} \right] \\
b &\coloneqq - \sigma^2 \loge \brc{1-\eta}.
\end{align}
\end{subequations}
Hence, the regularization function of the mismatched detector is\footnote{Note that constant $b$ does not play any role in the optimization problem. Thus, it is dropped.}
\begin{align}
f_\reg \brc{\bv} = a \norm{\bv}_0.
\end{align}
$f_\reg \brc{\bv}$ regularizes the least-squares, i.e. $ \norm{\mH \bv-\by}^2 $, with respect to the sparsity of the transmit signal imposed via index modulation. With this regularization function, the mismatched detector reduces to the so-called \textit{$\ell_0$-norm minimization} algorithm with regularization parameter $a$, considered in compressive sensing for sparse recovery.

Using Theorem~\ref{proposition:RS}, we now determine the average error rate of the mismatched detector in the large-system limit. To this end, we first derive an analytical expression for the error rate of the decoupled setting. The asymptotic error rate is then calculated from the derived expression using the solution of the fixed-point equations. Detailed derivations are given below:
\subsubsection*{Decoupled setting}
The decoupled \ac{rls} estimation for the mismatched detector is determined by setting 
\begin{align}
f_\reg \brc{\vv} = \left. a \right. \mone\set{ \vv \neq 0 }
\end{align}
in \eqref{eq:decop_rls}. Moreover, as $f_{\dec}\brc{x}=x$, we can conclude that 
\begin{align}
\hat{\xx} \brc{c,q} = f_{\dec}\brc{ \xx^\star \brc{c,q} } = \xx^\star \brc{c,q}.
\end{align}
Hence, we have
\begin{align}
\hat{\xx} \brc{c,q} =
\begin{cases}
0 & U_{\max} \brc{c,q} \leq \tau\brc{c} a\\
s_{\max} \brc{c,q} & U_{\max} \brc{c,q} > \tau\brc{c} a
\end{cases}\label{eq:decoup_setting_I}
\end{align}
where $U_{\max} \brc{c,q}$ and $s_{\max} \brc{c,q}$ are given by
\begin{subequations}
\begin{align}
U_{\max} \brc{c,q} &\coloneqq \max_{s\left. \in  \right. \setS} \left[ 2 \real{ \yy^* \brc{c,q} s } - \abs{s}^2 \right] \\
s_{\max} \brc{c,q} &\coloneqq \argmax_{s\left. \in  \right. \setS} \left[ 2 \real{ \yy^* \brc{c,q} s } - \abs{s}^2 \right] .
\end{align}
\end{subequations}
\subsubsection*{Asymptotic average error}
We set the distortion function to \eqref{eq:error_func}. The asymptotic error rate is then given by the decoupled distortion when $c$ and $q$ are set to $c^\star$ and $q^\star$, respectively. For the decoupled estimation in \eqref{eq:decoup_setting_I}, $D_{\asy} \brc{c,q}$ is given by
\begin{align}
D_{\asy} \brc{c,q} &= \Ex{ \mone\set{\hat{\xx} \brc{c,q}  \neq \xx } }{} = 1- P_{\rm C} \brc{c,q}
\end{align}
where $P_{\rm C} \brc{c,q}$ is given by
\begin{subequations}
\begin{align}
P_{\rm C} \brc{c,q} \coloneqq& \ \Ex{ \mone\set{\hat{\xx} \brc{c,q}  = \xx } }{}\\
=& \  \brc{1-\eta} G_0 \brc{c,q} + 2^{-S} \eta \sum_{s \left. \in \right. \setS } G_s \brc{c,q}
\end{align}
\end{subequations}
with $G_s \brc{c,q}$ being defined as
\begin{align}
G_s \brc{c,q} &\coloneqq \Pr\set{ {\hat{\xx} \brc{c,q}  = s } \vert \yy\brc{c,q} = s+\theta \brc{c,q} \zz }
\end{align}
for $s\in\setS_0$. Noting that $\zz\sim\mac\man\brc{0,1}$, $G_s \brc{c,q}$ is a Gaussian integral, which can be straightforwardly calculated.
\subsubsection*{Solving the fixed-point equations}
By substituting the decoupled estimation into Theorem~\ref{proposition:RS}, the fixed-point equations are given by
\begin{subequations}
\begin{align}
c \left. \theta\brc{c,q} \right.  &= \left. \tau\brc{c} \right. \mac\brc{c,q}  \label{Fix1} \\
q &= \mae\brc{c,q}. \label{Fix2}
\end{align}
\end{subequations}
where $\mac\brc{c,q}$ and $\mae\brc{c,q}$ are given by
\begin{subequations}
\begin{align}
\mac\brc{c,q}  \coloneqq& \hspace*{1mm} \Ex{ \real{\brc{\hat{\xx}\brc{c,q} - \xx } \zz^*} }{} \\
\mae\brc{c,q}  \coloneqq& \hspace*{1mm}  \Ex{ \abs{\xx^\star\brc{c,q} - \xx}^2  }{}.
\label{eq:Cor}
\end{align}
\end{subequations}
Similar to the decoupled distortion, $\mac\brc{c,q}$ and $\mae\brc{c,q}$ are written as sums of Gaussian integrals. For instance,
\begin{align}
\mae\brc{c,q} = \brc{1-\eta} \mae_0 \brc{c,q} + 2^{-S} \eta \sum_{s \left. \in \right. \setS } \mae_s \brc{c,q}
\end{align}
where $\mae_s \brc{c,q}$ is defined as
\begin{align}
\hspace*{-1.5mm}\mae_s \brc{c,q} \coloneqq  \Ex{ \left. \abs{\hat{\xx}\brc{c,q} - \xx}^2 \right\vert \yy\brc{c,q} \hspace*{-.5mm} = \hspace*{-.5mm} s\hspace*{-.5mm} + \hspace*{-.5mm} \theta \brc{c,q} \zz  }{\zz}.
\end{align}
Consequently, to solve the fixed-point equations, we calculate $\mac\brc{c,q}$ and $\mae\brc{c,q}$ explicitly from the Gaussian integrals, and plug them into \eqref{Fix1} and \eqref{Fix2}. The resulting equations are then solved numerically\footnote{An alternative approach is to iteratively find the stability point of the corresponding \textit{replica simulator}; see \cite{bereyhi2016statistical} for detailed discussions.}.

\subsection{Numerical Investigations}
The achievable average error rate for the mismatched \ac{map} detector is plotted against the $\snr$ in Fig.~\ref{fig:MAP_bound}. Here, the number of transmit antennas and \ac{rf} chains at user terminals are set to $M_{\rm u} = 8$ and $L_{\rm u} = 1$, respectively. Hence, the activity ratio evaluates to $\eta = L_{\rm u} / M_{\rm u} = 1/8$. The system load is $\alpha = 1/4$ meaning that there are four receive antennas per user terminal at the \ac{bs}, i.e. $N/K = 4$. The figure shows the error rate for three different scenarios with different constellation sets. Namely,
\begin{itemize}
\item 4-\ac{qam} transmission, in which
\begin{align}
\setS_{\rm QAM} = \set{ \pm\sqrt{\frac{P}{2}} \pm \rmj \sqrt{\frac{P}{2}}  },
\end{align}
\item \ac{bpsk} transmission with
\begin{align}
\setS_{\rm BPSK} = \set{ \pm\sqrt{P} },
\end{align}
\item \ac{ssk} transmission for which
\begin{align}
\setS_{\rm SSK} = \set{ \sqrt{P} }.
\end{align}
\end{itemize}

For the shown results the \ac{iid} Rayleigh fading model is assumed in which the channel gains are \ac{iid} zero-mean complex Gaussian random variables with variance $1/M$. The $\rmR$-transform of the asymptotic squared singular values of this channel matrix is given by \cite{tulino2004random} 
\begin{align}
\rmR\brc{c} = \frac{\xi^{-1}}{1-c} \; . \label{eq:R-transform}
\end{align}
To sweep over the $x$-axis, we set $P=1$, and $\sigma^2$ is appropriately scaled for a given $\snr$. 

\begin{figure}[!t]
\centering
\input{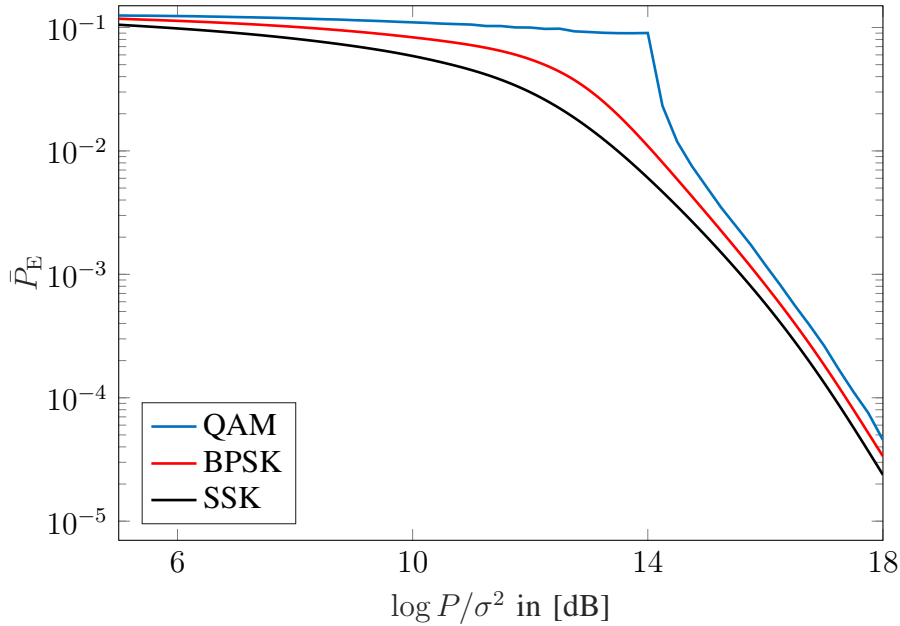}
\caption{Asymptotic error rate for the mismatched MAP estimator vs. $\snr$. Here, $M_{\rm u} = 8$ and $L_{\rm u} = 1$. The system load is set to $\alpha = 1/4$. The curves are good approximations of the minimum error rate.}
\label{fig:MAP_bound}
\end{figure}

The figure shows that the performance of the detector degrades, as the constellation size increases. This is an intuitive observation, since the energy efficiency reduces if the constellation size is raised at a fixed power. Following earlier discussions, Fig. \ref{fig:MAP_bound} is considered as a good approximation of the \textit{minimum} error rate bound.

\section{Application II: Tuning LASSO-Type Detectors}
\label{sec:App2}
In practice, suboptimal detectors are used for data recovery in \ac{masm} \ac{mimo} systems which are often \ac{rls}-based detectors. The performance of these detectors usually depends on a set of parameters which need to be effectively tuned. The tuning task aims to find these parameters, such that a desired distortion metric is minimized. 

In general, the optimal parameters of a detector depend on the factors at which the system operates, e.g. \ac{snr}. Therefore, the detector should be re-tuned frequently over time. A direct approach for tuning can hence burden the system computationally. Motivated by this, we propose a tuning strategy based on the large-system characterization.

\begin{tuning}
Consider a distortion metric, with respect to which an \ac{rls}-based detector is to be tuned. Using Theorem~\ref{proposition:RS}, derive the asymptotic limit of this metric for generic tuning parameters. Then determine the parameters, such that the given metric is minimized.
\end{tuning}

The advantage of this approach is that the optimal values of tunable parameters are analytically determined in terms of system factors. Hence, it imposes almost no additional computational complexity to the system. On the other hand, noting that the given detector is tuned asymptotically, the efficiency of the approach is questionable. In this section, we demonstrate how efficient this approach performs by considering \textit{\ac{lasso}-type} detectors for signal recovery in \ac{masm} systems. 

\subsection{LASSO-Type Detectors}
A \ac{lasso}-type detector corresponds to an \ac{rls}-based detection algorithm in which the regularization function is linearly proportional to the $\ell_1$-norm. This means that in these detectors
\begin{align}
f_{\reg} \brc{\bv} = \lambda \norm{\bv}_1
\end{align}
for some $\lambda \in \setR^+$ which is referred to as the regularization parameter.

\ac{lasso} is known to be an effective regularization for sparse recovery \cite{candes2006robust}. From computational points of view, $\ell_1$-norm regularization results in a convex objective function. Thus, for convex choices of $\setX$, the \ac{rls} optimization is tractably solved via convex programming. 

\ac{lasso}-type detectors for \ac{sm} relax $\setS_0$ into a convex set $\setX$ in order to get a computationally tractable recovery algorithm. Depending on the choice of $\setX$, there are various \ac{lasso}-type detection algorithms. These algorithms are roughly divided into two types: 
\begin{enumerate}
\item \textit{Classic \ac{lasso}} in which $\setS_0$ is relaxed into either $\setC$ or $\setR$, depending on the constellation.
\item \textit{Box-\ac{lasso}} in which $\setX$ is  a convex subset of the complex plane comprising $\setS_0$.
\end{enumerate}

In contrast to the classic \ac{lasso}, box-\ac{lasso} detectors are not investigated widely in the literature. This follows the observation that the box relaxations give negligible gains in several applications of sparse recovery. The study in \cite{atitallah2017box} has however shown that for spatially modulated signals, the box-\ac{lasso} algorithm achieves a considerable enhancement in some scenarios.
\subsection{Tuning Task for LASSO-Type Detectors}
The performance of \ac{lasso}-type detectors depends on the regularization parameter $\lambda$. Although the detection algorithm performs effectively for a proper choice of $\lambda$, setting $\lambda$ to some other value can significantly degrade the performance. This is observed through a numerical experiment:

\begin{example}
\label{ex:4}
$K=10$ users transmit unit-power \ac{ssk} signals, i.e., $S = 0$ and $P=1$. Each user terminal is equipped with $M_{\rm u} = 8$ antennas and $L_{\rm u} = 1$ \ac{rf} chain. The uplink channels experience \ac{iid} Rayleigh fading with zero-mean and variance $1/M$, and $\log \snr = 11$ dB. The \ac{bs} has $N=160$ antennas and employs a classic \ac{lasso} detector which determines 
\begin{align}
\bx^\star = \argmin_{\bv\in\setX_0^M} \left. \norm{\by-\mH \bv}^2 + \lambda \norm{\bv}_1 \right.
\end{align}
for $\setX = \setR$. The transmit signal is then recovered by setting all entries of $\bx^\star$ whose values are less than $\epsilon = 0.5$ to zero, and the rest to one. 

In practice, this detector is tuned such that the average \ac{mse} of the soft estimation is minimized, i.e, the average distortion with the distortion function given in \eqref{eq:MSE_Dist}. Choosing the \ac{mse} as the tuning metric follows its high robustness. For this setting, we plot in Fig.~\ref{fig:LASSO_example} the average \ac{mse} against $\lambda$. The simulations are averaged over $10^3$ independent realizations of the setup. As the figure shows, the optimal choice for the regularization parameter is approximately $\lambda^\star \approx 0.56$ at which $\log \mse = - 20.73$ dB. However, in the case of using a mismatched parameter $\lambda=0.06$, the \ac{mse} increases by $4$ dB. 
\end{example}

\begin{figure}[!t]
\centering
%
%
\definecolor{mycolor1}{rgb}{0.00000,0.44700,0.74100}%
\begin{tikzpicture}

\begin{axis}[%
scaled ticks=false,
width=4in,
height=2.8in,
at={(1.984in,1.234in)},
scale only axis,
xmin=0,
xmax=.6,
xtick={0.06,0.2,.4, .56},
xticklabels={{$0.06$},{$0.2$},{$0.4$},{$0.56$}},
xlabel style={font=\color{white!15!black}},
xlabel={$\lambda$},
ymin=-21.2,
ymax=-15.5,
ytick={-20.73,-16.73},
yticklabels={ {$-20.73$} , {$-16.73$}},
ylabel style={font=\color{white!15!black}},
ylabel={$\log \mse$ in [dB]},
axis background/.style={fill=white}
]
\addplot [color=mycolor1, line width=1.0pt, mark size=2.2pt, mark=*, mark options={solid, mycolor1}, forget plot]
  table[row sep=crcr]{%
0.01	-15.9169682443427\\
0.06	-16.730758616177\\
0.11	-17.4551240970641\\
0.16	-18.1387520706127\\
0.21	-18.7382484171431\\
0.26	-19.3135198098367\\
0.31	-19.7261307556428\\
0.36	-20.1772728284571\\
0.41	-20.3808743549023\\
0.46	-20.5407731109134\\
0.51	-20.7174755733688\\
0.56	-20.7344550704232\\
};
\addplot [color=black, dashed, forget plot]
  table[row sep=crcr]{%
0	-20.7344550704232\\
0.56	-20.7344550704232\\
};
\addplot [color=black, dashed, forget plot]
  table[row sep=crcr]{%
0	-16.730758616177\\
0.06	-16.730758616177\\
};
\addplot [color=black, dashed, forget plot]
  table[row sep=crcr]{%
0.06	-16.730758616177\\
0.06	-25\\
};
\addplot [color=black, dashed, forget plot]
  table[row sep=crcr]{%
0.56	-20.7344550704232\\
0.56	-25\\
};
\end{axis}
\end{tikzpicture}%
\caption{MSE of a classic LASSO detector vs. $\lambda$. In the underlying scenario, $K=10$ users, each equipped with $M_{\rm u} = 8$ antennas and a single \ac{rf} chain, transmit SSK signals. The \ac{bs} has $N=160$ antennas and receives the signals at $\log \snr = 11$ dB.}
\label{fig:LASSO_example}
\end{figure}
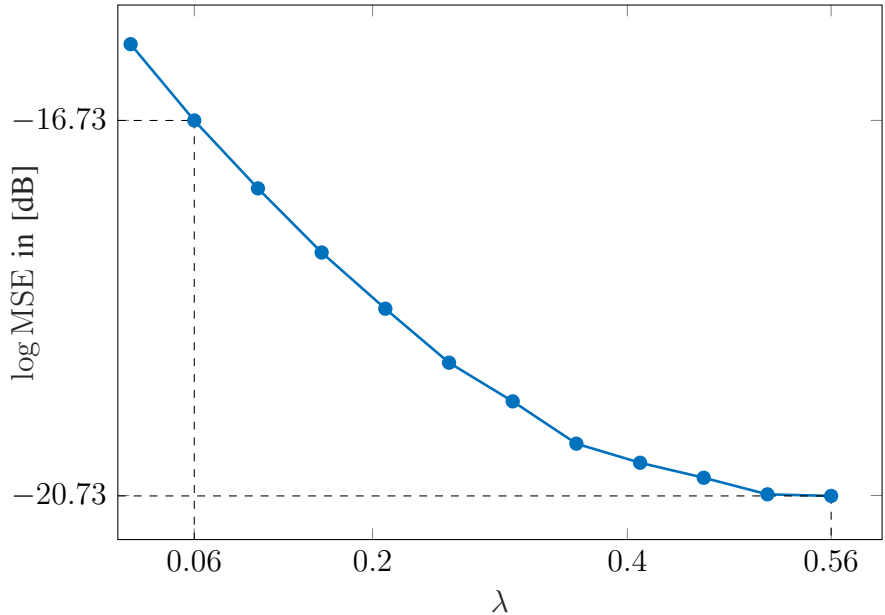

The observation in Example~\ref{ex:4} indicates the importance of accurate tuning for \ac{lasso}-type detectors. For sake of simplicity, we investigate our proposed approach in the sequel, by considering the special case of \ac{ssk} transmission. 

\subsection{Analysis of LASSO-Type Detectors for SSK Transmission}
In this section, we investigate the \ac{lasso}-type detection of \ac{ssk} transmission, i.e., $\setS = \{\sqrt{P} \}$, by considering the following box-constrained \ac{lasso} scheme:
\begin{itemize}
\item The regularization function is
\begin{align}
f_{\reg} \brc{\bv} = \lambda \norm{\bv}_1
\end{align}
for some \textit{regularization parameter} $\lambda$ which is tunable. 
\item Set $\setX$ is set to
\begin{align}
\setX = \left[ -\ell , u \right]
\end{align}
for some $\ell \geq 0$ and $u \geq \sqrt{P}$.
\item The decision function is given by
\begin{align}
f_{\dec} \brc{x} = \left. \sqrt{P} \right. \mone\set{x\geq \epsilon} =
\begin{cases}
\sqrt{P} &x \geq \epsilon\\
0		 &x < \epsilon
\end{cases}\label{eq:dec_box}
\end{align}
for some given \textit{threshold} $\epsilon$.
\end{itemize}

The given detector can be observed as a mismatched \ac{map} detector which postulates the signal prior distribution to be
\begin{align}
\rmp_{\bx, \mathrm{pos}} \brc{\bv} = \frac{1}{Z}
\begin{cases}
\exp\set{-\norm{\bv}_1} &\bv \in \setH \\
0 &\text{otherwise}
\end{cases}
\end{align}
with constant $Z = 2-\exp\set{-\ell} - \exp\set{-u}$, and $\setH$ denoting an $M$-dimensional hypercube constructed by limiting each di-mension to $\setX_0 = \left[ -\ell , u \right]$, i.e.,
\begin{align}
\setH = \set{ \bv \in \setR^N: -\ell \leq v_m \leq u \ \text{ for } m \in \dbc{M} }.
\end{align}

\begin{figure}[!t]
	\centering
	\input{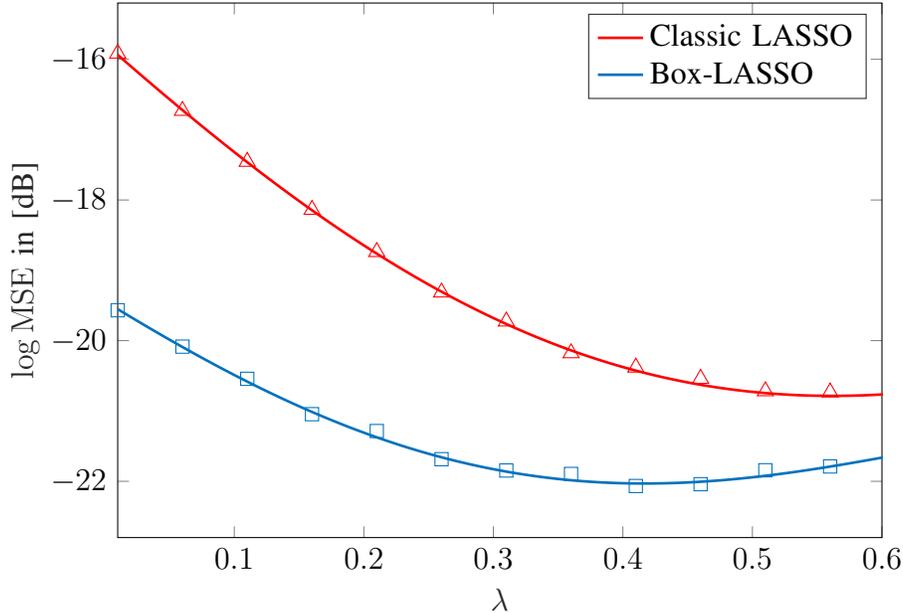}
	\caption{Asymptotic \ac{mse} against regularization parameter for Example~\ref{ex:4} when a box-\ac{lasso} algorithm with support $\setX=\dbc{0,1}$ is employed for detection. The markers show simulation results for the sizes given in Example~\ref{ex:4}.}
	\label{fig:1}
\end{figure}

For this particular setting, the asymptotic performance was characterized in \cite{atitallah2017box} for \ac{iid} Gaussian channel matrices\footnote{Entries of the channel matrix are however assumed to be real in \cite{atitallah2017box}.} via the convex Gaussian min-max theorem. Following numerical simulations and universality results \cite{korada2011applications,oymak2017universality}, it was then conjectured that the asymptotic results extend further beyond \ac{iid} Gaussian matrices. This conjecture was partially investigated in \cite{alrashdi2019precise} by providing some analysis.

In the sequel, we derive the asymptotic average distortion invoking Theorem~\ref{proposition:RS}. The results are valid for right unitarily invariant channel matrices including the formerly studied \ac{iid} Gaussian matrix, as well as other matrices whose corresponding performances were conjectured in \cite{atitallah2017box}.

\subsubsection*{Decoupled setting}
For this setting, the decoupled input $\xx$ is given by $\xx = \sqrt{P} \psi $ with $\psi$ being a Bernoulli random variable described by Definition~\ref{def:Dec}. The decoupled soft estimation is given in terms of $\yy\brc{c,q}$ as
	\begin{align}
	\xx^\star\brc{c,q} = 
	\begin{cases}
	u &\qquad \phantom{-\dfrac{\tau\brc{c}\lambda}{2} - \ell  \leq} \ \yy\brc{c,q} \geq \dfrac{\tau\brc{c}\lambda}{2}   + u \vspace*{2mm}\\
	\yy\brc{c,q} - \dfrac{\tau\brc{c}\lambda}{2} &\qquad \phantom{- - \ell}\dfrac{\tau\brc{c}\lambda}{2}  \leq \yy\brc{c,q} \leq \dfrac{\tau\brc{c}\lambda}{2} + u\vspace*{2mm}\\
	0 &\qquad \phantom{-\ell} -\dfrac{\tau\brc{c}\lambda}{2}  \leq \yy\brc{c,q} \leq \dfrac{\tau\brc{c}\lambda}{2} \vspace*{2mm}\\
	\yy\brc{c,q} + \dfrac{\tau\brc{c}\lambda}{2} &\qquad -\dfrac{\tau\brc{c}\lambda}{2}-\ell  \leq \yy\brc{c,q} \leq -\dfrac{\tau\brc{c}\lambda}{2}\vspace*{2mm}\\
	-\ell &\qquad \phantom{-\dfrac{\tau\brc{c}\lambda}{2}-\ell  \leq} \ \yy\brc{c,q} \leq -\dfrac{\tau\brc{c}\lambda}{2} - \ell
	\end{cases}
	\label{eq:dec_box_lasso}
	\end{align} 

\subsubsection*{Fixed-point Equations}
Following the approach in Section \ref{sec:MAP}, the fixed-point equations are derived as sums of Gaussian integrals which are straightforward to calculate. 

\subsection{Numerical Investigations}
\begin{figure}[!t]
	\centering
	\input{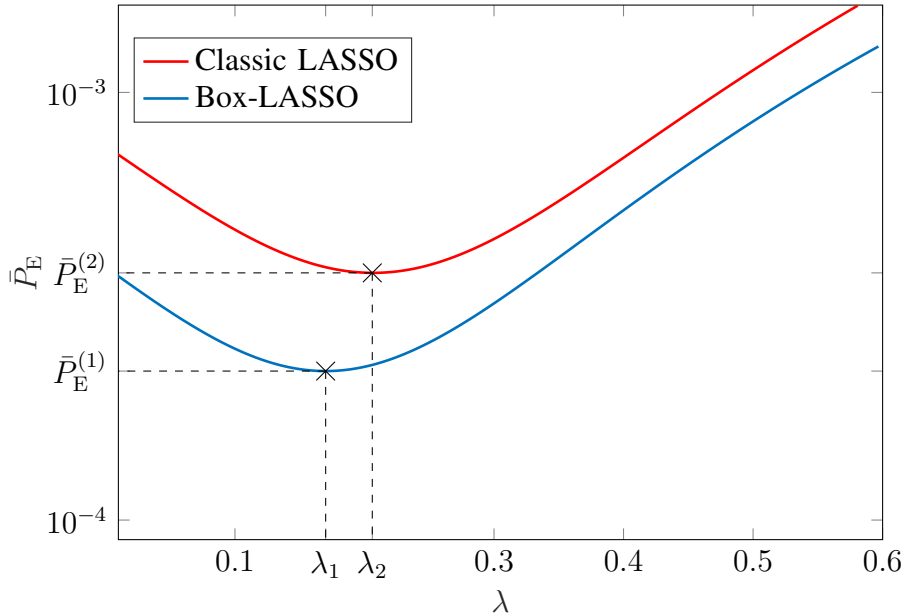}
	\caption{Asymptotic error rate vs. $\lambda$ for Example~\ref{ex:4}.}
	\label{fig:2}
\end{figure}
We consider the setting of Example~\ref{ex:4} but replace the classic \ac{lasso} detector is replaced by a box-constrained \ac{lasso} detector with support $\setX = \dbc{0,1}$. For this example, we plot the asymptotic average \ac{mse} against $\lambda$ in Fig.~\ref{fig:1}, using Theorem~\ref{proposition:RS}, where we set $\alpha = K/N = 1/16$ and $\xi = M_{\rm u} \alpha = 0.5$. In addition, the result for classic \ac{lasso} detector is shown for comparison. As the figure shows, the asymptotic results closely track finite-dimensional numerical simulations which are averaged over $10^3$ realizations.

We now tune these \ac{lasso}-type detectors using our tuning approach. To this end, we first plot the asymptotic error rate for the detectors against regularizer $\lambda$ in Fig.~\ref{fig:2}. As the figure shows, the error rate is minimized for the box-\ac{lasso} and classic \ac{lasso} at
\begin{subequations}
	\begin{align}
	\lambda_1 &\approx 0.17,\\
	\lambda_2 &\approx 0.206,
	\end{align}
\end{subequations}
with values 
\begin{subequations}
	\begin{align}
	\bar{P}_{\rm E}^{(1)} &\approx 1.9 \times 10^{-4},\\
	\bar{P}_{\rm E}^{(2)} &\approx 3.1 \times 10^{-4},
	\end{align}
\end{subequations}
respectively. We now repeat this procedure while sweeping $\log \snr$ from $5$ dB to $13$ dB. The optimized error rates for both the detectors are sketched in Fig.~\ref{fig:3} against $\snr$. For sake of comparison, the optimal error bound derived in Section~\ref{Sec:MAP-Aprox} is further sketched. It is observed that the box-\ac{lasso} detector achieves roughly an error rate in between of the optimal rate and the classic \ac{lasso} rate. For sake of comparison, the tuned detectors are simulated numerically for the sizes given in Example~\ref{ex:4}. The results are shown by the markers in the figure. The simulations show close consistency with the asymptotic results.

\begin{figure}[t]
	\centering
	\input{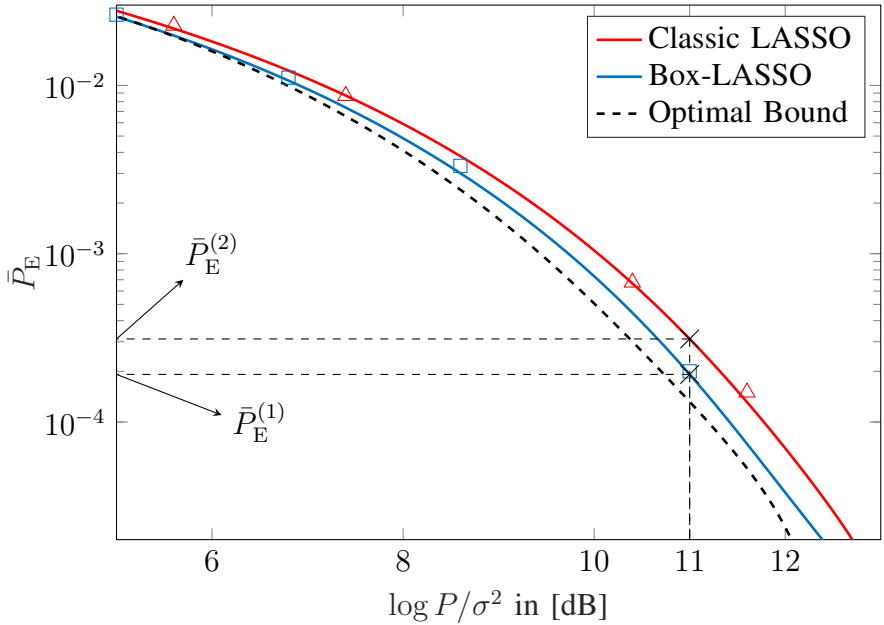}
	\caption{The minimum achievable error rate for box-\ac{lasso} detection vs. $\snr$.}
	\label{fig:3}
\end{figure}

We finally show the tuned regularizer in terms of $\snr$ in Fig.~\ref{fig:4}. Such a curve can be seen as a dictionary, which is derived analytically prior to system setup. During the transmission, the regularizer is continuously updated to the tuned value corresponding to the operating point at which the system operates.
\begin{figure}[t]
	\centering
	\input{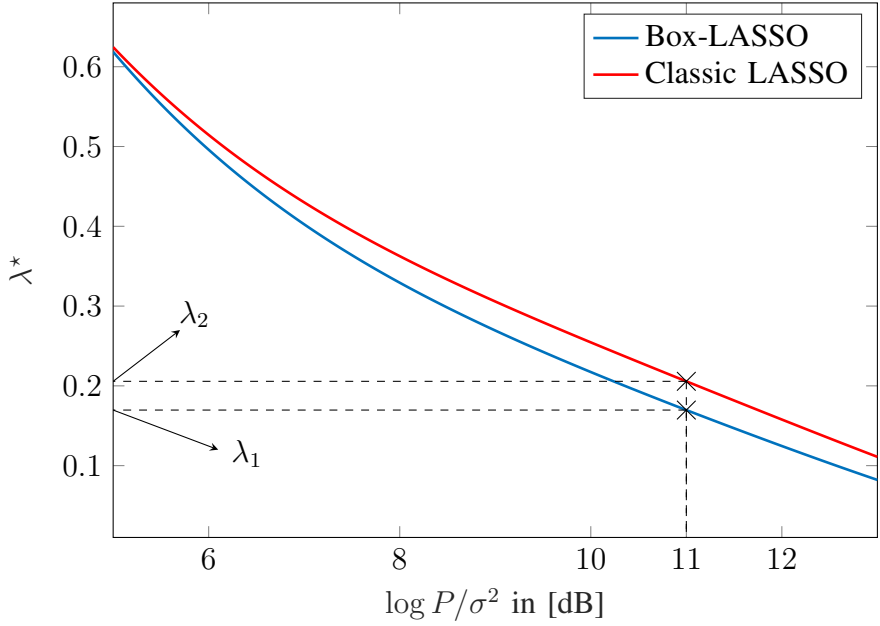}
	\caption{Optimal regularization parameter for box-\ac{lasso} detection vs. $\snr$.}
	\label{fig:4}
\end{figure}

\section{Conclusions}
\label{sec:Conc}
Considering the precise \textit{non-\ac{iid}} model of the transmit signals, the analytical results have demonstrated that the system asymptotically behaves identically to a scenario in which \ac{iid} sparse signals with similar sparsity are transmitted. This finding validates earlier analyses in the literature, e.g., \cite{atitallah2017box}, and was further observed through numerical investigations of some special cases in Sections~\ref{sec:App1} and \ref{sec:App2}.

Together with our earlier investigations in \cite{gade2019fair}, this work draw a clear picture of \ac{sm} \ac{mimo} systems and their use cases. The results further indicate that in \ac{sm} \ac{mimo} systems, using a \textit{tuned box-constrained} \ac{lasso} detector achieves a close-to-optimal error rate with a tractable computational complexity. 


\appendices

\section{Proof of Theorem~\ref{th1}}
\label{app:A}
From Stirling's formula we know that for any integer $M$
\begin{align}
\sqrt{2 \pi} \exp\set{-M} \leq \frac{M!}{M^{M+0.5}} \leq \exp\set{-M+1}.
\end{align}
Hence, we can write
\begin{align}
\frac{\sqrt{2\pi}}{e^2} \left. \Theta_0  \right. \Theta_1
\leq {M_{\rm u} \choose L_{\rm u} } \leq 
\frac{e}{2\pi} \left. \Theta_0 \right. \Theta_1
\end{align}
where we define $\Theta_0$ and $\Theta_1$ as
\begin{subequations}
\begin{align}
\Theta_0 &\coloneqq \sqrt{\frac{M_{\rm u}}{ L_{\rm u}\brc{M_{\rm u}-L_{\rm u}} }}\\
\Theta_1 &\coloneqq \frac{M_{\rm u}^{M_{\rm u}}}{ L_{\rm u}^{L_{\rm u}} \brc{M_{\rm u}-L_{\rm u}}^{M_{\rm u}-L_{\rm u}} }.
\end{align}
\end{subequations}
From the definition of $R_{\rm u} = I + \left. L_{\rm u} \right. S$, we can conclude that
\begin{align}
\log \frac{\sqrt{2\pi}}{e^2} - 1 + \Xi
\stackrel{\star}{<} R_{\rm u} \leq
\log \frac{e}{2\pi} + \Xi \label{eq:R_u}
\end{align}
where $\Xi$ is given by
\begin{align}
\Xi = \left. \log \Theta_0  \right. +\log\Theta_1 + \left. L_{\rm u} \right. S
\end{align}
and $\star$ follows the fact that 
\begin{align}
I = \left\lfloor \log {M_{\rm u} \choose L_{\rm u} } \right\rfloor > \log {M_{\rm u} \choose L_{\rm u} } -1.
\end{align}
Considering the definition of the activity ratio, $\log\Theta_0$ is given by
\begin{align}
\log \Theta_0 = - \frac{1}{2} \log M_{\rm u} - \frac{1}{2} \left[ \log \eta + \log \brc{1-\eta} \right].
\end{align}
Moreover, for the term $\log \Theta_1$, we have
\begin{subequations}
\begin{align}
\log \Theta_1 &= M_{\rm u}\log \frac{M_{\rm u}}{M_{\rm u} - L_{\rm u}} - L_{\rm u} \log \frac{L_{\rm u}}{M_{\rm u} - L_{\rm u}} \\
&= -M_{\rm u} \left. \eta \right. \log \eta -M_{\rm u} \brc{ 1-\eta } \log \brc{ 1-\eta } \\
&= M_{\rm u} H_2 \brc{\eta}.
\end{align}
\end{subequations}
Substituting into \eqref{eq:R_u}, and noting that $\bar{R} = R_{\rm u} / M_{\rm u}$, we have
\begin{align}
I_{\rm d} < \bar{R} \leq I_{\rm u} \label{eq:R_u2}
\end{align}
with $I_{\rm d}$ and $I_{\rm u}$ defined in Theorem~\ref{th1}.
Defining the function 
\begin{align}
f_{\rm fix}\brc{x} \coloneqq \bar{R} - H_2 \brc{\eta} - \left. \eta \right. S + \frac{\log M_{\rm u} - x}{2M_{\rm u} }
\end{align}
for a given $M_{\rm u}$ and $\eta$, we conclude from \eqref{eq:R_u2} that $f_{\rm fix}\brc{C_{\rm d}} \geq 0$ and $f_{\rm fix}\brc{C_{\rm u}} < 0$. The method of intervals implies that there exist a constant $C\in \left( C_{\rm d} , C_{\rm u} \right]$, such that $f_{\rm fix}\brc{C} = 0$. This concludes the proof.

\section{Proof of Theorem~\ref{proposition:RS}}
\label{app:B}
The proof of Theorem~\ref{proposition:RS} follows Proposition~1 in \cite{bereyhi2018MAP}. To start with the proof, let us define $\bi = \left[ i_1 , \ldots, i_K\right]^\trp$, with $i_k$ being the modulation index of user $k$. For a given realization of $\bi$, the average distortion is defined as
\begin{subequations}
\begin{align}
D\brc{\bi} &\coloneqq \lim_{M\uparrow \infty} \frac{1}{M} \sum_{m=1}^M \Ex{ F_{\rm D} \brc{x^\star_m;x_m} \vert \bi }{}\\
&= \lim_{M\uparrow \infty} \frac{1}{M} \sum_{m=1}^M \int F_{\rm D} \brc{x^\star_m;x_m} \dif P\brc{ x^\star_m,x_m \vert \bi }
\end{align}
\end{subequations}
with $P\brc{ x^\star_m,x_m \vert \bi }$ denoting the cumulative joint distribution of $\brc{x^\star_m,x_m}$ conditioned on realization $\bi$. The joint distribution of the transmit symbols conditioned on $\bi$ is given by
\begin{subequations}
\begin{align}
\rmp\brc{\bx|\bi} &= \rmp\brc{\bx_1 , \ldots ,\bx_K|\bi} \\
&= \left. \prod_{k=1}^K \right.  \prod_{\ell\left.\in\right.\setL\brc{i_k}} \rmp_{s}\brc{x_{k,\ell} } \prod_{\ell\left.\notin\right.\setL\brc{i_k}} \mone\set{ x_{k,\ell} = 0}
\end{align}
\end{subequations}
where $x_{k,\ell}$ denotes the $\ell$-th entry of $\bx_k$ and $\setL\brc{i_k}$ is the activity support of user $k$. By defining
\begin{align}
\Sp\bi = \set{ m \in [M] : x_m \neq 0 },
\end{align}
one can write the conditional distribution as
\begin{align}
\rmp\brc{\bx|\bi} &= \left. \prod_{ m \in \Sp{\bi} } \right. \rmp_{s}\brc{x_{m} }.
\end{align}
One can view $\rmp\brc{\bx|\bi}$ as a vector of two \ac{iid} blocks. The entries of the first block are those belonging to $\Sp\bi$. This block is distributed \ac{iid} with $\rmp_{s}\brc{x_m}$. The second block consists of the remaining entries and is \ac{iid} with $\delta\brc{x_m}$.

For an asymmetric transmit signal with \ac{iid} blocks, the asymptotic performance of a \ac{map} estimator is characterized in \cite{bereyhi2018MAP}. Using Proposition~1 in \cite{bereyhi2018MAP}, we conclude that under the given assumptions\footnote{These assumptions include replica continuity and replica symmetry. For details, see \cite[Section 5]{bereyhi2018MAP}.}
\begin{align}
D \brc{\bi} \coloneqq \lim_{M\uparrow \infty} \frac{1}{M} \sum_{m=1}^M \Ex{ F_{\rm D} \brc{\xx^\star_m;x_m} \vert \bi}{}
\end{align}
where $\xx^\star_m$ is defined for a complex zero-mean unit-variance Gaussian $\zz$ as
\begin{align}
\xx^\star_m \coloneqq \rls{\setX}{x_m + \theta\brc{c^\star,q^\star} \zz \vert \tau \brc{c^\star} }. \label{ScalarRLS}
\end{align}
$\theta\brc{c,q}$ and $\tau\brc{c}$ are defined by \eqref{eq:rs-4a} and \eqref{eq:rs-4b} in Definition~\ref{def:Dec}, respectively, and $c^\star$ and $q^\star$ are a pair of solutions to the fixed-point equations
\begin{subequations}
	\begin{align}
	\frac{ \theta\brc{c,q} }{\tau\brc{c} } c &= \lim_{M\uparrow \infty}  \frac{1}{M} \sum_{m=1}^M \Ex{ \real{\brc{\xx^\star_m - x_m } \zz^*} \vert \bi }{} \label{Fix_local_1} \\
	q &= \lim_{M\uparrow \infty} \frac{1}{M} \sum_{m=1}^M \Ex{ \abs{\xx^\star_m - x_m}^2  \vert \bi }{}. \label{Fix_local_2}
	\end{align}
\end{subequations}

Following the structure of $\rmp\brc{\bx|\bi} $, we conclude that 
\begin{align}
D \brc{\bi} = \lim_{M\uparrow \infty} & \left[ \frac{\abs{ \Sp{\bi} } }{M} \Ex{ F_{\rm D} \brc{\mathrm{s}^\star; \mathrm{s} } \vert \bi}{} + \brc{1-\frac{\abs{ \Sp{\bi} } }{M}} \Ex{ F_{\rm D} \brc{\mathrm{s}_0^\star; 0 } \vert \bi}{} \right] \label{EQ1}
\end{align}
where $\mathrm{s}$ is a random variable distributed with $\rmp_s\brc{\cdot}$, and $\mathrm{s}^\star$ and $\mathrm{s}_0^\star$ are defined as in \eqref{ScalarRLS} by setting $x_m = \mathrm{s}$ and $x_m=0$, respectively. Noting that $\abs{ \Sp{\bi} } / {M} = \eta$, and that the expectation does not depend on the realization of $\bi$, we have
\begin{align}
D \brc{\bi} = \eta \Ex{ F_{\rm D} \brc{\mathrm{s}^\star; \mathrm{s} } }{} + \brc{1-\eta } \Ex{ F_{\rm D} \brc{\mathrm{s}_0^\star; 0 } }{}.
\end{align}
By defining $\xx = \psi \mathrm{s}$ for the Bernoulli random variable $\psi$ of Definition~\ref{def:Dec}, $D \brc{\bi}$ is given by
\begin{align}
D \brc{\bi} = \Ex{ F_{\rm D} \brc{\xx^\star\brc{c^\star , q^\star}; \xx } }{} \label{EQ2}
\end{align}
for $\xx^\star\brc{c , q}$ given in Definition~\ref{def:Dec}. By some lines of derivations analogous to those given in \eqref{EQ1}-\eqref{EQ2}, it is shown that the fixed-point equations in \eqref{Fix_local_1} and \eqref{Fix_local_2} reduce to those given in Theorem~\ref{proposition:RS}.

Finally, by noting that the asymptotic expression for $D\brc{\bi}$ does not depend on $\bi$, we can write
\begin{subequations}
	\begin{align}
	D &= \Ex{D\brc{\bi}  }{\bi}\\
	&= D_{\rm asy} \brc{c^\star,q^\star}.
	\end{align}
\end{subequations}
This concludes the proof.

\bibliography{ref}
\bibliographystyle{IEEEtran}
\end{document}